\documentclass[11pt]{article}
\usepackage
[
        a4paper,
        left=2.5cm,
        right=2.5cm,
        top=3cm,
        bottom=3cm,
]
{geometry}

\usepackage{amsthm}
\usepackage{amsmath}
\usepackage{amssymb}
\usepackage{mathtools}
\usepackage{thmtools}
\usepackage{thm-restate}
\usepackage{algorithm}
\usepackage{algpseudocode}
\usepackage{enumitem}
\usepackage[labelfont=bf]{caption}
\usepackage{import}

\newcommand{\ttt}{\texttt}

\newcommand{\pr}{\mathbb{P}}
\newcommand{\ex}{\mathbf{E}}

\newcommand{\rs}{\mathbb{R}}

\newcommand{\ve}{\varepsilon}
\newcommand{\tx}[1]{\text{#1}}

\newcommand{\poly}{\text{ poly}}

\newcommand{\lp}{\left(}
\newcommand{\rp}{\right)}
\newcommand{\lb}{\left[}
\newcommand{\rb}{\right]}
\newcommand{\bw}{\mathbf{W}}

\newtheorem{theorem}{Theorem}
\newtheorem{lemma}[theorem]{Lemma}

\newtheorem{definition}[theorem]{Definition}

\newtheorem{fact}[theorem]{Fact}


\bibliographystyle{plainurl}

\pdfoutput=1

\begin{document}

\thispagestyle{plain}
\title{An Efficient Semi-Streaming PTAS for Tournament Feedback Arc Set with Few Passes}
\author{Anubhav Baweja\\ CMU \and 
Justin Jia\\ CMU \and David P. Woodruff\\ CMU}
\date{}

\maketitle

\begin{abstract}
We present the first semi-streaming polynomial-time approximation scheme (PTAS) for the minimum feedback arc set problem on directed tournaments in a small number of passes. Namely, we obtain a $(1 + \varepsilon)$-approximation in time $O \left( \text{poly}(n) 2^{\text{poly}(1/\varepsilon)} \right)$, with $p$ passes, in $n^{1+1/p} \cdot \text{poly}\left(\frac{\log n}{\varepsilon}\right)$ space. The only previous algorithm with this pass/space trade-off gave a $3$-approximation (SODA, 2020), and other polynomial-time algorithms which achieved a $(1+\varepsilon)$-approximation did so with quadratic memory or with a linear number of passes. We also present a new time/space trade-off for $1$-pass algorithms that solve the tournament feedback arc set problem. This problem has several applications in machine learning such as creating linear classifiers and doing Bayesian inference. We also provide several additional algorithms and lower bounds for related streaming problems on directed graphs, which is a largely  unexplored territory.  
\end{abstract}

\section{Introduction}

Graph problems have historically been an area of interest because of their various applications, but as the size of these problems becomes very large it becomes essential to design algorithms suitable for models handling such graphs, such as the streaming model. In the streaming model, the input graphs are presented as a read-once tape of edges where a possibly adversarial ordering of observed edges must be accounted for, and low space complexity must be maintained. Much work studying undirected graphs is already present in the streaming literature\cite{feigenbaum2005graph}; we refer the reader to the survey by McGregor \cite{mcgregor2014graph}. However, relatively little has been done, especially on the algorithmic side, in the streaming setting for problems involving directed graphs (digraphs). A notable exception is the the initial investigation conducted by Chakrabarti et al. \cite{chakrabarti_ghosh_mcgregor_vorotnikova_2019} (see also lower bounds in  \cite{assadi2020multi,assadi2020near,chen2021almost,guruswami2016superlinear}), who study the Minimum Feedback Arc Set problem on tournament graphs, among other problems. 

In this work, our focus is also on large directed graphs in the streaming setting, and in particular we present a new algorithm for the Minimum Feedback Arc Set problem on tournament graphs.  A tournament graph is a directed graph where there exists a single edge between all pairs of vertices, and the goal is to find the minimum number of edges that need to be deleted in order to make the graph acyclic. We will also discuss applications and other problems for directed graph streams below.

\subsection{Applications}

The Minimum Feedback Arc Set problem on tournament graphs can be rephrased as the Ranking by Pairwise Comparison (RPC) problem. Given a finite set $V$ and a set of pairwise preference labels (denoted by $u \prec v$ if $v$ is preferred over $u$, where $u, v \in V$), the goal of the RPC problem is to find an ordering of the elements of $V$, from least preferred to most preferred, to minimize the number of disagreements. Finding a suitable global ranking for data, described only by pairwise preference relationships, arises in various practical applications and is commonly referred to as Kemeny-Young Rank Aggregation \cite{kenyon2007rank}. This has applications to machine translation \cite{rosti2007combining} and ranking search engine results \cite{geng2008query}. Whereas the bulk of the learning to rank literature involves rating standalone inputs on a predefined scale, this ranking task involves relative relationships between the objects, whether they are webpage search results or tournament competitors. An ordering of elements that satisfies this condition of minimizing regret (up to some error) and another condition of ``local chaos'' (described by \cite{ailon2012active}), can also be used to create a regularized large margin linear classifier. The main focus of Ailon's work \cite{ailon2012active} was to demonstrate the application's feasibility in the query-efficient setting, and is shown here to also be viable in a semi-streaming setting.

Algorithms for solving the Minimum Feedback Arc Set problem can also be used to decrease the computational cost of  Bayesian inference \cite{festa1999feedback, bar1998approximation} by reducing the weighted loop cutset problem to a weighted blackout-feedback vertex set problem. Bayesian networks are popular among the machine learning community, as they provide an interpretable representation of data, along with varying degrees of conditional independence between attributes. The so-called updating problem in Bayesian networks can be solved using the conditioning method; however, the conditioning method runs in time exponential in the size of a loop cutset, which is potentially large, so reducing the problem and using a feedback arc set approximation would reduce the complexity of solving the updating problem. 




Other applications in a similar domain for the Minimum Feedback Arc Set problem include collaborative filtering \cite{cohen1998learning}, where ordered recommendations need to be generated for users based on their past preferences and the product history for customers with similar interests. There are also various applications for other directed graph problems such as computing the strongly connected components of a graph, including the study of model checking in formal verification \cite{xie2000implicit} and for data flow analysis in compiler optimization \cite{bhattacharyya1999synthesis}.

%
%

Before describing our results, we need to set up some notation. Let $O^*(f(n))$ denote a function bounded by $f(n) \poly \lp \frac{\log n}{\ve} \rp$.
The Feedback Arc Set problem for tournaments is parameterized by $V$, the vertex set, and $E$, the edge set of a tournament graph. One can also use a weight matrix $W$ to represent the edge set $E$ such that $W(u, v) = 1$ if $(u, v) \in E$ and $0$ otherwise. 

\begin{definition}
\label{definition:fascost}
The \textbf{cost of a permutation} $\pi$ on vertices with respect to the vertex set $V$ and weight matrix $W$ is 
$$C(\pi, V, W) = \sum_{u, v \in V : \rho_{\pi}(u) < \rho_{\pi}(v)} W(v, u)$$
where $\rho_{\pi}(u)$ is the rank of vertex $u$ in permutation $\pi$.
\end{definition}
The Feedback Arc Set problem aims to find a permutation $\pi^*$ of vertices for which
$$\pi^* = \text{argmin}_{\pi} C(\pi, V, W)$$

\begin{definition}
\label{definition:fassamplecost}
The \textbf{restricted cost of a permutation} $\pi$ on vertices for a given vertex set $V$ and weight matrix $W$, with respect to the edge set $E$, is 
$$C_E(\pi, V, W) = \frac{\binom{n}{2}}{|E|}\sum_{(u, v) \in E : \rho_{\pi}(u) < \rho_{\pi}(v)} W(v, u)$$
\end{definition}

To later demonstrate lower bounds for other directed graph problems, reductions to communication games are used as in Chakrabati et al. \cite{chakrabarti_ghosh_mcgregor_vorotnikova_2019}, where they proved lower bounds, both in the single pass and the multi-pass setting, for a variety of digraph problems such as performing a topological sort, detecting if a graph is acyclic, as well as finding a minimum feedback arc set. A classic problem in communication complexity is the \ttt{INDEX} problem. Here Alice and Bob are given a vector $x \in \{0,1\}^n$ and an index $i \in [n]$, respectively,  and the goal is for Bob to correctly determine whether $x_i$ is a one or a zero. If only Alice can send a single message to Bob, then the minimum length of this message is $\Omega(n)$ for any randomized protocol which succeeds with probability at least $2/3$. For demonstrating a lower bound for multi-pass algorithms, another useful problem is the set chasing problem \cite{guruswami2016superlinear}, which is discussed in Section \ref{sec:lowerbounds}.

\subsection{Contributions and Previous Work}

\paragraph{Minimum Feedback Arc Set.}
Although the Minimum Feedback Arc Set problem is NP-hard even for tournament graphs \cite{alon2006ranking, charbit2007minimum}, a significant amount of work has been done in order to obtain polynomial time approximations \cite{ailon2008aggregating, coppersmith2006ordering, arora2002new, kenyon2007rank} for the problem on tournament graphs. Furthermore, Chen et al. \cite{chen2021almost} gave an $n^{2- o(1)}$ space lower bound, even for $o(\sqrt{\log n})$ pass streaming algorithms for the feedback arc set problem on general graphs. The maximum acyclic subgraph problem, which is the dual of the minimum feedback arcset problem, also has a lower bound of $O(n^{1 - \ve^{c/p}})$ space for $p$-pass, $(1+\ve)$-approximation polynomial time algorithms, where $c$ is a constant \cite{assadi2020multi}. This further motivates the exploration of algorithms for restricted graphs such as tournament graphs.

Our main contribution (Theorem \ref{theorem:ptas}) is the first algorithm that uses $O(n^{1+1/p} \poly (\log n / \ve))$ space and $p$ passes, while providing a $(1 + \ve)$-approximation, in polynomial time.
Our result gives a significant improvement over \cite{chakrabarti_ghosh_mcgregor_vorotnikova_2019}, which achieved the same pass/space trade-off, but could only give at best a $3$-approximation. Our work also significantly improves other polynomial-time algorithms which achieved a $(1+\ve)$-approximation, as such algorithms either require quadratic memory or a linear number of passes.

Note that given $\log n$ passes, this algorithm achieves $O(n \poly (\log n / \ve))$ space, consistent with the original definition of the semi-streaming model. Previously, Kenyon and Schudy \cite{kenyon2007rank} gave a PTAS for the problem which uses $\Theta(n^2)$ space and Ailon \cite{ailon2012active} introduced an algorithm that specifically reduces the query complexity. The latter has a similar motivation to the streaming model, in that there is limited access to the input, but the model is substantially different from the streaming model. To the best of our knowledge, this is the first work to present such tradeoffs for algorithms that solve the Minimum Feedback Arc Set problem on tournaments up to a $(1+\ve)$-approximation. A summary of previous work in the streaming model on this problem, as well as our results, is given in Table \ref{tab:fast}. 

\paragraph{Additional Streaming Problems on Directed Graphs.}
We also continue the study of other problems on directed graphs in data streams. There has been little research in identifying strongly connected components of a directed graph (digraph) in the streaming model. Laura et al. \cite{laura2011computing} demonstrated an algorithm with $O(n\log n)$ space to find the strongly connected components in the W-stream setting, but we focus on the standard insertion-only framework while using as few passes and memory as possible. Section \ref{sec:other} introduces an algorithm that can be implemented with $\tilde O (n^{1 + 1/p})$ space in $p$ passes, using a deterministic subroutine that finds a Hamiltonian path and is guaranteed to meet the space constraints, in contrast to the commonly adopted KWIKSORT algorithm \cite{chakrabarti_ghosh_mcgregor_vorotnikova_2019} that only does so with high probability guarantees.

Lastly, Section \ref{sec:lowerbounds} demonstrates lower bounds for the space complexity of three directed graph problems: finding strongly connected components, determining if the graph is acyclic, and determining whether there exists a path from a given vertex $s$ to all other vertices. The results are presented for two kinds of space lower bounds: for single pass settings and for multiple pass settings. These strong lower bounds also motivate the shift of emphasis from solving the aforementioned directed graph problems for general inputs to special classes of digraphs, in particular tournament graphs. 

\begin{table}[]
    \centering
    \begin{tabular}{c|c|c|c|c}
         \textbf{Algorithm} & \textbf{Approximation} & \textbf{Time} & \textbf{Passes} & \textbf{Space} \\
         Sort by wins \cite{coppersmith2006ordering} & 5 & Polynomial in $n$ & 1 & $O^*(n)$ \\
         Kwiksort \cite{ailon2008aggregating} & 3 & Polynomial in $n$ & $O^*(1)$ & $O^*(n)$ \\
         Modified Kwiksort \cite{chakrabarti_ghosh_mcgregor_vorotnikova_2019} & 3 & Polynomial in $n$ & $p$ & $O^*(n^{1 + 1/p})$ \\
         PTAS \cite{kenyon2007rank} & $1+\ve$ & Polynomial in $n$ & 1 & $O(n^2)$ \\
         Brute force solve with sketching \cite{chakrabarti_ghosh_mcgregor_vorotnikova_2019} & $1 + \ve$ & Exponential in $n$ & 1 & $O^*(n)$ \\
         Sample and Rank \cite{ailon2012active} & $1+\ve$ & Polynomial in $n$ & $O^*(n)$ & $O^*(n)$ \\
         Our Result & $1 + \ve$ & Polynomial in $n$ & $p$ & $O^*(n^{1 + 1/p})$
    \end{tabular}
    \caption{Algorithms for the Feedback Arc Set problem on Tournament graphs}
    \label{tab:fast}
\end{table}

\subsection{Techniques and Intuition}\label{sec:intuition}
We discuss our key techniques and intuition for our main result for the Feedback Arc Set problem on  tournaments. The techniques employed for our other  results are outlined in Section \ref{sec:other}.
    
Previous work by Kenyon and Schudy \cite{kenyon2007rank} and Ailon \cite{ailon2012active} uses the idea of \textbf{single vertex moves}: given a permutation $\pi$, take a vertex $u$ and move it to an index $j$ such that the cost of the new permutation decreases. Ailon's work shows in order to obtain a $(1+\ve)$-approximation, one does not need to strictly reach a local optimum with respect to cost-improving single vertex moves. Rather, making long moves with significant cost improvement suffices. This holds because the algorithm is recursive, so the shorter single vertex moves can be optimized away in base cases via brute force or the additive approximation algorithm given by Frieze and Kannan \cite{frieze1999quick}, which we can turn into a relative error approximation. Note that this latter algorithm is also used by Kenyon and Schudy \cite{kenyon2007rank}, but we show that this algorithm can also be used in the semi-streaming setting as well.

A brief description of our algorithm is given below:
\begin{enumerate}
    \item The algorithm first obtains an $O(1)$-approximation to the Minimum Feedback Arc Set by sorting the vertices by indegree. This achieves a $5$-approximation as shown by Coppersmith et al. \cite{coppersmith2006ordering}. This is easy to compute in one pass in a stream, and already brings us close to the $(3+\epsilon)$-approximation factor of \cite{chakrabarti_ghosh_mcgregor_vorotnikova_2019}. 
    \item In order to bring the approximation factor down from 5 to $(1+\ve)$, we recursively partition the vertex set into smaller vertex sets, while finding and applying cost-improving single vertex moves to the permutation:
    \begin{itemize}
        \item The base case of the recursion is reached when the input vertex set is small enough that we can brute force through all the possible permutations, and pick the one with the least cost.
        \item Another base case is reached when the cost of the vertex set on the input permutation is quadratic in the size of the vertex set, up to polynomial in $\epsilon$ factors, in which case we use the additive approximation algorithm given in \cite{frieze1999quick}, which we can
        turn into a relative error approximation.
        We are the first to use such an algorithm in the semi-streaming setting.
        \item Otherwise, the algorithm first applies several cost-improving single vertex moves through a method called {\sf ApproxLocalImprove}, partitions the vertex set into two halves, and then recursively optimizes the two subsets. Note that this fixes the relative order of all $(u,v)$ such that $u \in U, v \in V$ in our final output permutation, where $U$ is the left subset and $V$ is the right subset, and the algorithm will make no further effort to optimize this cost which `crosses' between $U$ and $V$. However, applying the single vertex moves before recursing on the subsets ensures that this `crossing' cost is not too large.
    \end{itemize}
    \item {\sf ApproxLocalImprove} is our main subroutine and the method through which we find and apply cost-improving single vertex moves to the input permutation:
    \begin{itemize}
        \item First the algorithm obtains some sample sets from the edge set $E$. These sample sets are denoted $E_{v,i}$, with $v$ being the vertex being moved, and $i$ representing the $i$-th sample set for $v$. The sets $E_{v,i}$ each have size $O(\tx{poly}(\log n / \ve))$ and there are also $O(\tx{poly}(\log n / \ve))$ such sample sets (i.e., the number of different indices $i$) per vertex. A more detailed description of these sample sets is given in Section \ref{sec:formal}. Since the algorithm can only use $O^*(n)$ space, where $n$ is the size of the vertex set, the exact cost improvement due to a single vertex cannot be obtained since the entire vertex set $E$ cannot be stored. Therefore, these sample sets $E_{v,i}$ help approximate the cost improvement of single vertex moves. Moreover, these $O(\tx{poly}(\log n / \ve))$ sample sets per vertex are independent of the permutation; therefore once some single vertex moves have been made, consulting the approximations provided by $E_{v,i}$, the sample sets $E_{v,j}$ (where $j > i$) are still independent of the resulting permutation after applying those single vertex moves. 
        \item Multiple single vertex moves need to be made simultaneously in order to achieve the abovementioned $O^*(1)$ sample sets per vertex. To see this, first note that even when $k$ single vertex moves are made in parallel, the cost improvement achieved by them is similar to the sum of the cost improvements that would have been achieved if $k$ moves were made individually. For 2 single vertex moves this is easy to see since any move affects the cost improvement achieved by another move by an additive factor of $O(1)$. Thus, the cost improvement of doing all those moves in parallel is approximated well by the sum of the individual cost improvements. This is also demonstrated with an example in Figure \ref{fig:svms}. Finally, since the cost of the feedback arc set is bounded by $O(n^2)$, if the total cost improvement achieved by the parallel moves is quadratic in $n$ (up to $\log n$ and $\ve$ factors), then the algorithm would only need to make these parallel moves $O^*(1)$ times, justifying the number of sample sets used and showing that the number of passes required for this {\sf ApproxLocalImprove} subroutine is also $O^*(1)$.
\end{itemize}
\end{enumerate}

\begin{figure}[ht]
    \centering
    \includegraphics[width=10cm]{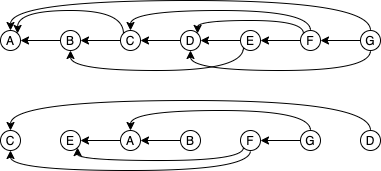}\\
    \caption{The image on top represents the tournament graph before the single vertex moves $(C \to 1), (E \to 2), (D \to 7)$ ($(E \to 2)$ represents moving vertex $E$ to index 2) are made, and the image on the bottom is the graph obtained after making all the moves in parallel (the graph only displays back edges: any edges not displayed go from left to right). The 3 moves improve the cost by 2, 1, 3 respectively, and the total cost is improved by 5 after making all the moves simultaneously even though the moves overlap.}
    \label{fig:svms}
\end{figure}
    
The above helps achieve $O^*(n)$ space and $\log n$ passes, but in order to obtain the desired result of $O^*(n^{1 + 1/p})$ space and $p$ passes, multiple passes of the algorithm are emulated in one pass by increasing the size of the sample sets for all vertices. Specifically, the sample sets for the largest nodes in a layer $L$ are obtained first, which are also used to approximate the cost improvement of single vertex moves in the smaller nodes of $L$. This way, nodes in different layers of the recursion tree can all be handled in the same pass.

\section{Feedback Arc Set for Tournament Graphs}\label{sec:feedback}
Our main algorithm departs from the methods of the earlier streaming algorithm of \cite{chakrabarti_ghosh_mcgregor_vorotnikova_2019},  and is instead inspired by the algorithm of  \cite{ailon2012active} for the query model. However, in order to obtain a streaming algorithm, we need to make a number of crucial modifications to the algorithm of \cite{ailon2012active}. First, our algorithm executes many single vertex moves in parallel, see lines 5-26 in {\sf ApproxLocalImprove} below. Second, we define a notion of {\bf realized improvement in cost} to deal with the effects of multiple vertex moves being made in parallel. Third, we need to balance the space complexity of our algorithm by running an additive error algorithm in parallel; see the {\sf AddApproxMFAS} algorithm below, which we convert to relative error in our context. 


\subsection{Definitions and Background}\label{ssec:definitions}

\begin{definition}
\label{definition:svm}
A \textbf{single vertex move} $(u \to j)$ on permutation $\pi$ is defined as moving the vertex $u$ in $\pi$ to rank $j$, and shifting all the vertices from their original position to their new position accordingly. The resulting permutation is denoted $\pi_{u \to j}$. Additionally, if $M$ is a set of single vertex moves, then $\pi_M$ represents the permutation obtained after making all moves in parallel on permutation $\pi$. 
\end{definition}

Note that the single vertex move described above is not a vertex swap. Additionally, for a set $M$ of moves, if $(u_1 \to j_1) \in M$ and $(u_2 \to j_2) \in M$ then it must be true that $u_1 \neq u_2$, otherwise the moves cannot be made simultaneously. If multiple vertices were to be moved to the same position, this conflict would be resolved by preserving their original relative ordering.

\begin{definition} 
\label{definition:testmove}
$\text{TestMove}(\pi, V, W, u, j)$ is the improvement in cost of the feedback arc set induced by a permutation $\pi$, when  the single vertex move $u \to j$ is conducted to obtain the permutation $\pi_{u \to j}$. That is, if without loss of generality $j \geq \rho_{\pi}(u)$, then
$$\text{TestMove}(\pi, V, W, u, j) = \sum_{v:\rho_{\pi}(v) \in [\rho_{\pi}(u) + 1, j]} \big(W(v,u) - W(u,v)\big)$$
\end{definition}

Note that $\text{TestMove}(\pi, V, W, u, j)$ can be computed easily for arbitrary inputs $\pi, u, j$ if $\Theta(n^2)$ space were allowed. However, approximations become necessary to achieve less memory. The following definition provides an unbiased estimator for $\text{TestMove}(\pi, V, W, u, j)$ without observing all the edges of the graph.

\begin{definition}
\label{definition:testmoveapprox}
$\text{TestMove}_E(\pi, V, W, u, j)$ is an approximation to the improvement in cost of the feedback arc set from $\pi$ to $\pi_{u \to j}$, obtained by only considering the edges in the subset of edges $\tilde{E} = \{(v, u) \in E : \rho_{\pi}(v) \in [\rho_{\pi}(u) + 1, j]\}$. Assuming without loss of generality that $j \geq \rho_{\pi}(u)$, then
$$\text{TestMove}_E(\pi, V, W, u, j) = \frac{j - \rho_{\pi}(u)}{|\tilde{E}|} \sum_{v:(v,u) \in \tilde{E}} \big(W(v,u) - W(u,v)\big)$$
If $E \in \binom{V}{2}$ is chosen uniformly at random from from all multisets of a given size, $\text{TestMove}_E(\pi, V, W, u, j)$ acts as an empirical unbiased estimator of $\text{TestMove}(\pi, V, W, u, j)$ \cite{ailon2012active}.
\end{definition}

As will be seen later, multiple single vertex moves need to be made with respect to the same sample set to obtain the desired space bounds. Making these moves simultaneously affects the cost of other moves being made, which necessitates the notion of \textbf{realized cost improvement} for each move. In other words, there needs to be some quantification of how much worse off the cost would be if that move were not made, and this is made explicit by the next definition.

\begin{definition}
\label{definition:testmoveparallel}
Let $M$ be a set of single vertex moves. $\text{TestMove}^M(\pi, V, W, u, j)$ is the \textbf{realized improvement in cost} of the feedback arc set due to the single move $(u \to j)$ within $M$. That is,
$$\text{TestMove}^M(\pi, V, W, u, j) = C(\pi_M, V, W) - C(\pi_{M \setminus \{(u \to j)\}}, V, W)$$
\end{definition}

Other than the above definitions which will be used in the description and analysis of our main algorithm, there is another result that will be useful:

\begin{definition}
\label{definition:mas}
Given a directed graph, the goal of the \textbf{maximum acyclic subgraph} (MAS) problem is to output an ordering $\pi$ on the vertices $V$ such that the number of forward edges is maximized (the forward edges making up the acyclic subgraph). This is identical to the MFAS problem since the goal of that problem is to output a permutation such that it minimizes the number of backward edges.
\end{definition}

\begin{fact}
\label{fact:addapprox}
\cite{arora2002new}, \cite{frieze1999quick}
There exists a randomized algorithm polynomial-time approximation scheme for the MAS problem on tournament graphs. Given $\beta$, $\eta > 0$, the algorithm {\sf AddApproxMAS}$(V, W, \beta, \eta)$ outputs, in time $O \lp n^2+2^{O(1/\beta^2)} \log \frac{1}{\eta} \rp$ in \cite{frieze1999quick}), an ordering $\pi$ on $V$ whose cost with respect to $W$ is at least $OPT - \beta n^2$, with probability at least $1 - \eta$, where $OPT$ is the size of the largest acyclic subgraph.
\end{fact}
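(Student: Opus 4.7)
Since this fact is quoted from \cite{arora2002new, frieze1999quick}, my plan is to sketch the argument those papers use, specialized to tournament MAS. The key preliminary observation is that for any tournament on $n$ vertices, $OPT \geq \binom{n}{2}/2$ (a random permutation keeps half of the edges in expectation), so an additive error of $\beta n^2$ corresponds to a multiplicative $(1 + O(\beta))$ factor; this places tournament MAS squarely in the regime where additive-PTAS techniques for dense MAX-CSP apply.

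I would first design a subroutine $\mathcal{A}$ with constant success probability and then amplify. $\mathcal{A}$ draws a uniformly random subset $R \subseteq V$ of size $s = \Theta(1/\beta^2)$, enumerates all $s! = 2^{O(1/\beta^2)}$ orderings of $R$ (absorbing the $\log(1/\beta)$ factor into the constant in the exponent, since $\beta$ is treated as fixed relative to $n$), and for each ordering $\sigma$ constructs a full ordering $\pi_\sigma$ of $V$ by placing each $v \in V \setminus R$ in whichever of the $s + 1$ gaps of $\sigma$ maximizes the contribution of edges between $v$ and $R$. The cost of each $\pi_\sigma$ is then estimated on the sample $R$ (rescaled by $\binom{n}{2}/\binom{s}{2}$ as in Definition \ref{definition:fassamplecost}), and $\mathcal{A}$ outputs the $\pi_\sigma$ with the largest sample-estimated cost. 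Reading $W$ once costs $O(n^2)$; computing all placements and all sample-evaluations takes a further $2^{O(1/\beta^2)}$ time. To boost the success probability to $1 - \eta$, I would run $\mathcal{A}$ independently $O(\log(1/\eta))$ times and keep the best candidate, yielding the claimed runtime $O(n^2 + 2^{O(1/\beta^2)} \log(1/\eta))$.

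Correctness would rest on a Hoeffding bound over the $\binom{s}{2}$-sized sample: for any fixed permutation $\pi$, the sample-estimated cost is within $\beta n^2$ of the true cost with probability $1 - \exp(-\Omega(s \beta^2))$, and a union bound over the $s!$ candidates makes this hold simultaneously for every $\pi_\sigma$ with constant probability. Conditioning on that good event, the restriction $\sigma^{*} := \pi^{*}|_R$ of an optimal $\pi^{*}$ is one of the enumerated orderings, and the greedy extension $\pi_{\sigma^{*}}$ has true cost at least $OPT - O(\beta n^2)$. The main obstacle is exactly this last inequality — showing that greedy placement of the outside vertices against $\sigma^{*}$ alone (ignoring interactions among $V \setminus R$) loses only $O(\beta n^2)$. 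The slick way around this, and the one actually taken in \cite{frieze1999quick}, is to skip greedy placement entirely and instead approximate $W$ by a cut decomposition into $O(1/\beta^2)$ rank-one cut matrices with $L_1$-error $\beta n^2$, then solve MAS on the decomposed instance exactly by a dynamic program over $2^{O(1/\beta^2)}$ vertex profiles. Either route produces an ordering within additive $\beta n^2$ of $OPT$ with probability at least $1 - \eta$, as required.
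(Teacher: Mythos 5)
Your closing pivot to cut decomposition is the route the paper actually takes: Fact~\ref{fact:addapprox} is a cited result, and the paper's Appendix~\ref{sec:addapprox-semi} reproduces the Frieze--Kannan \ttt{GetCutDecomposition} algorithm and sketches how the decomposition, combined with the quadratic-assignment/LP placement step, yields the additive-$\beta n^2$ guarantee in time $O\lp n^2 + 2^{O(1/\beta^2)}\log(1/\eta)\rp$. So the last few lines of your proposal are on target.

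The exhaustive-sample-and-greedy subroutine $\mathcal{A}$ that occupies most of your write-up, however, has gaps beyond the one you flag. First, $s!$ with $s = \Theta(1/\beta^2)$ is $2^{\Theta(\beta^{-2}\log(1/\beta))}$, not $2^{O(1/\beta^2)}$; $\beta$ is the approximation parameter, not a fixed constant, so the $\log(1/\beta)$ factor cannot be absorbed. The cut-decomposition route genuinely achieves the stated exponent because it never enumerates orderings of a sample. Second, the selection step is broken as described: evaluating $\pi_\sigma$ only on the $\binom{s}{2}$ pairs inside $R$ depends solely on $\sigma$ (the restriction of $\pi_\sigma$ to $R$) and carries no information about how the greedy rule placed $V\setminus R$, so the estimate cannot distinguish a good extension from a bad one. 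You would instead need to draw fresh uniformly random pairs from $\binom{V}{2}$ for each candidate, as in Definition~\ref{definition:fassamplecost}, and re-do the union bound accordingly. Third, as you acknowledge, the inequality that greedy placement against $\sigma^{*}$ alone loses only $O(\beta n^2)$ is exactly the crux, and establishing it requires a more delicate hybrid/conditioning argument for linear ordering problems than you supply. Since you correctly identify cut decomposition as the actual mechanism in \cite{frieze1999quick}, I would present that directly and either drop the $\mathcal{A}$ sketch or clearly mark it as heuristic intuition in the spirit of \cite{arora2002new}.
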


Solving the maximum acyclic subgraph problem and minimum feedback arc set problem on a tournament graph produces the same ordering $\pi$. To see this let $A, B$ be the size of the minimum feedback arc set and maximum acyclic subgraph, respectively, due to the ordering $\pi$. Then $A + B = \binom{|V|}{2}$. Clearly, minimizing $A$ maximizes $B$. Therefore the above fact can be rephrased to solve the minimum feedback arc set problem on tournament graphs, that is, there exists an algorithm {\sf AddApproxMFAS}$(V, W, \beta, \eta)$ with all of the properties mentioned above which outputs an ordering $\pi$ such that its cost is at most $OPT + \beta n^2$ with probability
at least $1 - \eta$.

As will be seen later, our proposed algorithm uses the {\sf AddApproxMFAS} algorithm as a subroutine with parameters $\beta = \ve^3$ and $\eta = \frac{1}{n^4}$:

\begin{theorem}
\label{thm:addapprox}
Given $\beta = \ve^3$ and $\eta = \frac{1}{n^4}$, {\sf AddApproxMAS}$(V, W, \beta, \eta)$ can output an ordering with cost at least $OPT - \ve^3 n^2$, with probability at least $1 - \frac{1}{n^4}$, in time $O \lp n^2+2^{O(\ve^{-6})} \log n \rp$, using $O^*(n)$ space and $O^*(1)$ passes.
\end{theorem}

A brief description of the {\sf AddApproxMAS} algorithm is given in Appendix \ref{sec:addapprox-semi}, as well as the proof for the above theorem.\\

Finally, we introduce the main result of \cite{ailon2012active}, which is used as a subroutine in part of our algorithm. To describe the result we first need the following definition:

\begin{definition}
\label{definition:epsgoodpartition}
Given a set $V$ of size $n$, an ordered decomposition is a list of pairwise disjoint subsets $V_1, ..., V_k \subseteq V$ such that $\bigcup_{i=1}^k V_i = V$. If $\Pi(V)$ is the set of all permutations on $V$, we say that $\pi \in \Pi(V)$ respects $V_1, ..., V_k$ if for all $u \in V_i, v \in V_j, i < j,$ we have $\rho_{\pi}(u) < \rho_{\pi}(v)$ We denote the set of
permutations $\pi \in \Pi(V)$ respecting the decomposition $V_1, ..., V_k$ by $\Pi(V_1, ..., V_k)$. A decomposition $V_1, ..., V_k$ is \textbf{$\ve$-good} with respect to $W$ if 
$$\min_{\pi \in \Pi(V_1, ..., V_k)} C(\pi, V, W) \leq (1+\ve) \min_{\pi \in \Pi(V)} C(\pi, V, W)$$
\end{definition}

\begin{fact}
\label{fact:ailon}
\cite{ailon2012active}
Given a vertex set $V$, a weight matrix $W$ (describing the edges of the graph), and an error tolerance parameter $0 < \ve < 1$, there exists a polynomial time algorithm which returns, with constant probability, an $\ve$-good partition of $V$, querying at most $O(\ve^{-6}n \log^5 n)$ locations in W in expectation. The running time of the algorithm is $O(n \tx{ poly}(\log n, \ve^{-1}))$. This algorithm is referred to as {\sf SampleAndRank}.
\end{fact}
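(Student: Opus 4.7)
The plan is to build the $\ve$-good partition recursively while using sampling to approximate the single vertex moves (SVMs) that guide the partitioning. First I would initialize an ordering by sorting vertices according to (sampled estimates of) their indegrees, which by Coppersmith et al.~\cite{coppersmith2006ordering} yields an $O(1)$-approximation. Then I would recurse: at each level, take the current ordering $\pi$ on a subset $U$ with $|U|=m$, apply enough cost-improving SVMs so that no remaining SVM improves the cost by more than roughly $\delta m$ for some $\delta = \text{poly}(\ve)/\text{polylog}(n)$, and then split $U$ at the median rank and recurse on the two halves. The base case uses brute force when $m \leq \text{polylog}(n/\ve)$ and invokes {\sf AddApproxMFAS} with $\beta = \text{poly}(\ve)$ when $C(\pi, U, W)$ is already $\Omega((\ve m)^2)$, so that an additive $\beta m^2$ error is absorbed multiplicatively.

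The statement's correctness rests on a local-optimality lemma in the spirit of Kenyon--Schudy: if $\pi$ admits no SVM improving cost by more than $\delta m$, then $C(\pi, U, W) \leq (1 + O(\delta \log m)) \cdot OPT(U)$. Granting this, fixing the left/right partition after achieving $\delta$-local-optimality adds at most an $O(\ve)$ fraction of $OPT(U)$ to the cost at each recursion level, and since the recursion tree has $O(\log n)$ depth and the costs of the disjoint subproblems on one level sum to at most the parent cost, the cost blow-up telescopes to an overall $(1+\ve)$-factor. The main obstacle is calibrating $\delta$: it must be small enough for this telescoping to close, yet large enough that the number of SVMs per level is bounded and that the sampled estimates of $\text{TestMove}$ have sufficient accuracy relative to $\delta m$.

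For sample complexity, for each vertex $u$ and each candidate rank $j$ I would draw $O(\ve^{-2}\log n)$ i.i.d.\ samples from the range $(\rho_\pi(u), j]$ to estimate $\text{TestMove}(\pi, V, W, u, j)$ to additive accuracy $(\ve/\text{polylog}(n)) \cdot (j-\rho_\pi(u))$ via Hoeffding's inequality. A crucial subtlety, already flagged in Definition~\ref{definition:testmoveapprox}, is that after applying an SVM the current permutation depends on past samples, so fresh samples must be drawn; the number of refreshes is bounded by noting that each significant SVM reduces cost by at least $\delta m$, so at most $O(m/\delta)$ moves occur per recursion level. Multiplying $O(m/\delta)$ moves by $O(\ve^{-2}\log n)$ samples per move and summing across $O(\log n)$ recursion levels and across subproblems (whose total vertex count per level is $n$) yields the claimed $O(\ve^{-6} n \log^5 n)$ query bound, after tracking the $\ve$-exponents contributed by $\delta^{-1}$, per-sample accuracy, and union bounds over all vertices, targets, and refreshes.

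Running time is then immediate from the same count: each TestMove estimate and each SVM application uses $\text{polylog}(n, \ve^{-1})$ arithmetic operations on the stored sample sets, so the total work across all subproblems and all moves is $O(n \text{ polylog}(n, \ve^{-1}))$. The hardest single step is the local-optimality lemma itself: bounding the cost excess of $\pi$ over $OPT$ by the total available SVM gain requires a careful pairing argument that matches each disagreement between $\pi$ and an optimal $\pi^*$ to an SVM in $\pi$ with controlled gain, and converting this into a clean $(1+\ve)$-factor recursion over $O(\log n)$ levels without compounding error is what drives the unusual exponent structure in the final query bound.
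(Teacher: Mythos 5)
Fact~\ref{fact:ailon} is a cited result from Ailon~\cite{ailon2012active}; the paper does not prove it, and the pseudocode in Appendix~\ref{sec:sar} is the closest thing to a reference. Comparing against that description, your proposal describes a materially different algorithm — in fact it is essentially the paper's own {\sf GetNearOptimalPermutation}/{\sf Recurse} (Algorithms~\ref{algorithm:one},~\ref{algorithm:two}), not {\sf SampleAndRank}/{\sf RecurseSAR}.

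The crucial divergence is in the base cases. You have the algorithm solve them: brute force when $m$ is polylogarithmic, and {\sf AddApproxMFAS} when the current cost is $\Omega((\ve m)^2)$, so that the output is a near-optimal \emph{permutation}. But Fact~\ref{fact:ailon} only claims an $\ve$-good \emph{partition} is returned, and {\sf RecurseSAR} accordingly returns the trivial partition $\{V\}$ in both base cases without touching the subproblem (see Appendix~\ref{sec:sar}). This distinction is not cosmetic: it is exactly what makes the $O(\ve^{-6}n\log^5 n)$ query bound possible. If instead you invoke {\sf AddApproxMFAS} on a node of size $N$, you pay $\Theta(N^2)$ queries (Fact~\ref{fact:addapprox}), and since the high-cost base case can fire on nodes of near-linear size when $C^*=\Theta(n^2)$, the total query count would be $\Omega(n^2)$, violating the claim. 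The $\ve$-good-partition abstraction exists precisely to let {\sf SampleAndRank} certify a partition as good without expending the queries to solve its pieces. A smaller but real inaccuracy is the initialization: {\sf SampleAndRank} obtains its initial $O(1)$-approximation via a Quicksort-style pivoting scheme using $O(n\log n)$ queries in expectation, not by sorting on indegrees (indegree sort is used in the paper's own {\sf GetNearOptimalPermutation}, and is natural in the streaming model, but computing all indegrees in the query model would cost $\Theta(n^2)$ queries, again blowing the budget).

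Your sample-complexity bookkeeping and the appeal to a Kenyon--Schudy-style local-optimality lemma are in the right spirit, but as written the calibration does not match Ailon's: {\sf ApproxLocalImproveSAR} only pursues SVMs whose length $d$ satisfies $\lceil\log d\rceil\in[B,L]$ with $B=\lceil\log\Theta(\ve N/\log n)\rceil$ and whose estimated gain exceeds $\ve d/\log n$, it maintains a separate sample set $E_{v,i}$ per dyadic length scale $i$, and it refreshes samples after \emph{each} single move rather than re-sampling a fixed number of times per level. These choices are what yield the particular exponent structure $\ve^{-6}\log^5 n$, and an argument framed around a uniform threshold $\delta m$ per node and a per-level refresh budget will not reproduce that bound without redoing Ailon's accounting. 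If your intent is to re-derive Fact~\ref{fact:ailon}, you need to argue about the partition-only variant with trivial base cases and Ailon's dyadic sample structure; if your intent is to prove something about the paper's own {\sf GetNearOptimalPermutation}, you are targeting the wrong statement.
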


Note that the above result does not output a permutation $\pi$ with near-optimal cost: it outputs a partition $(V_1, ..., V_k)$ of $V$ such that the optimal permutation that respects the partition is near-optimal globally. As it turns out, it is easy to obtain near-optimal permutations on the individual $V_i$ because
\begin{enumerate}
    \item They are small enough to enumerate over all permutations, or
    \item Their optimal cost is high enough that {\sf AddApproxMFAS} outputs a good approximation.
\end{enumerate}

We now describe the algorithm in detail.

\newpage

\subsection{Formal Description of Algorithm}
\label{sec:formal}
We now state our algorithms and subroutines, providing a concise English description and pseudocode for each. Afterwards, we give our formal analysis. 

\begin{algorithm}[H]
\caption{{ \sf GetNearOptimalPermutation}$(V, W, \ve)$: Top level function for computing our $(1+\ve)$-approximation to the minimum feedback arc set. It obtains an $O(1)$-approximation to the minimum feedback arc set, and then calls {\sf Recurse}. The latter recursively executes single vertex moves to bring down the approximation factor to $(1+\ve)$.}
\begin{algorithmic}[1]
\State $n \leftarrow |V|$
\State $\pi \leftarrow O(1)$-approximation by sorting by indegree as described by Coppersmith et al. \cite{coppersmith2006ordering}
\State \Return {\sf Recurse}$(V, W, \ve, n, \pi)$
\end{algorithmic}
\label{algorithm:one}
\end{algorithm}

\begin{algorithm}[H]
\caption{{\sf Recurse}$(V, W, \ve, n, \pi)$: Recursive part of the algorithm that optimizes the current vertex set and then recursively optimizes partitions of the vertex set. The base case of the recursion is reached when the vertex set is small enough to enumerate over all possible $N!$ permutations. The other base case is reached when the cost of the vertex set is quadratic in $N$, which is when the {\sf AddApproxMFAS} algorithm will be effective. If a base case is not reached, then the algorithm calls {\sf ApproxLocalImprove} to perform cost-improving single vertex moves before recursing on the two subsets of $V$. Note that this procedure is similar to the {\sf RecurseSAR} procedure given by Ailon \cite{ailon2012active}, except in the base cases their algorithm returns the trivial partition $\{V\}$ since they only need to output an $\ve$-good partition. }
\begin{algorithmic}[1]
\State $N \leftarrow |V|$
\If {$N \leq \log n / \log \log n$}
\State $\pi^* \leftarrow $ Minimizer for $V, W$ obtained by brute force
\State \Return{$\pi^*$}
\EndIf
\State $E \leftarrow$ random subset of $O(\ve^{-4} \log n)$ elements from $\binom{V}{2}$ (with repetition)
\State $C \leftarrow C_E(\pi, V, W)$ \quad  ($C$ is an additive $O(\ve^2N^2)$ approximation of $C(\pi, V, W)$)
\If {$C = \Omega(\ve^2N^2)$}
\State \Return{{\sf AddApproxMFAS}$(V, W, \ve^3, 1/n^4)$}
\EndIf
\State $\pi_1 \leftarrow$ ApproxLocalImprove$(V, W, \ve, n, \pi)$ 
\State $k \leftarrow$ uniformly random integer in the range $[N/3, 2N/3]$
\State $V_L \leftarrow \{ v \in V : \rho_{\pi}(v) \leq k \}, \pi_L \leftarrow $ restrict $\pi_1$ to $V_L$
\State $V_R \leftarrow \{ v \in V : \rho_{\pi}(v) > k \}, \pi_R \leftarrow $ restrict $\pi_1$ to $V_R$
\State \Return{Concatenate {\sf Recurse}$(V_L, W, \ve, n, \pi_L)$ and {\sf Recurse}$(V_R, W, \ve, n, \pi_R)$}
\end{algorithmic}
\label{algorithm:two}
\end{algorithm}


\begin{algorithm}[H]
\caption{{\sf GetMoves}$(V, W, \ve, n, \pi, \{E_{v,i} : v \in V \}, c)$: Helper function used by {\sf ApproxLocalImprove} to obtain cost-improving single vertex moves on $\pi$. The log length of the single vertex move ($\log \lceil |j - \rho_{\pi}(u)|\rceil$) must lie between $B$ and $L$, which are both logs of terms linear in $N$. Moreover, the cost improvement as approximated by the sample sets $\{E_{v,i} : v \in V \}$ must be at least $c |j - \rho_{\pi}(u)| \ve / \log n$.}
\begin{algorithmic}[1]
\State $B \leftarrow \lceil \log (\Theta(\ve N / \log n)) \rceil, L \leftarrow \lceil \log N \rceil$
\State \Return{$\{(u \to j) : u \in V$ and $j \in [n]$ and \qquad \qquad [using $d = |j - \rho_{\pi}(u)|$ and $l = \lceil \log d \rceil$]\\
\qquad \qquad $l \in [B, L]$ and TestMove$_{E_{u,i}}(\pi, V, W, u, j) > c d \ve / \log n$ \}}
\end{algorithmic}
\label{algorithm:four}
\end{algorithm}

\begin{algorithm}[H]
\caption{{\sf ApproxLocalImprove}$(V, W, \ve, n, \pi)$: Function to repeatedly apply cost-improving single vertex moves. This optimizes the permutation on $V$ as long as there exists a single vertex move with linear cost improvement (up to $\log n$ and $\ve$ factors). No cost-improving single vertex moves need to be made if $V$ is small enough. Otherwise, on lines 5-13 the algorithm first obtains sample sets $\{E_{v,i}\}$, where $\{E_{v,i} : v \in V\}$ and $\{E_{v,j} : v \in V\}$ are independent and identically distributed for all $i \neq j$. Once the samples are obtained, in lines 15-25 they are used to perform significant cost-improving single vertex moves until no such moves exist. Section \ref{section:correct} discusses why the algorithm succeeds in doing so with only poly$(\log n / \ve)$ iterations.}
\begin{algorithmic}[1]
\State $N \leftarrow |V|, B \leftarrow \lceil \log (\Theta(\ve N / \log n)) \rceil, L \leftarrow \lceil \log N \rceil$
\If {$N = O(\ve^{-3}\log^3 n)$}
\State \Return{$\pi$}
\EndIf
\For{$v \in V$}
\For {$i \in [1, \Theta(\ve^{-6} \log^6 n)]$}
\State $E_{v,i} \leftarrow \phi$
\For {$m \in [1, \Theta(\ve^{-5} \log^5 n)]$}
\State $j \leftarrow$ uniformly random integer chosen from $[1, N]$
\State $E_{v,i} \leftarrow E_{v,i} \cup \{(v, \pi(j)\}$
\EndFor
\EndFor
\EndFor
\State $i \leftarrow 1$
\While {$|S| > 0$ such that $S =$ {\sf GetMoves}$(V, W, \ve, n, \pi, \{E_{v,i}\}, 1)$}
\State $M_1 \leftarrow $ {\sf GetMoves}$(V, W, \ve, n, \pi, \{E_{v,i}\}, \frac12)$
\For {$m \in [1, \Theta(\ve^{-2} \log^2 n)]$}
\State $M_2 \leftarrow \{(u \to j) \in M_1: l \in [B, L] \tx{  and  }$ $\tx{TestMove}_{E_{u,i}}(\pi, V, W, u, j) > \frac12 \ve d / \log n \}$ \\
\qquad \qquad \qquad [using $d = |j - \rho_{\pi}(u)|$, $l = \lceil \log d \rceil$, \\
\qquad \qquad \qquad $d$ is the 'length' of the single vertex move $(u \to j)$ on $\pi$]
\State $M_3 \leftarrow $ Sample $\Theta(\ve^2 \log^{-2} n)$-fraction of $M_2$ (All $u$'s must be unique, otherwise cannot move simultaneously)
\State $\pi \leftarrow \pi_{M_3}$
\State $i \leftarrow i + 1$ \quad ($E_{u,i}$ is no longer independent of $\pi$, so need a fresh set of samples)
\EndFor
\EndWhile
\State \Return{$\pi$}
\end{algorithmic}
\label{algorithm:five}
\end{algorithm}

Our main algorithm along with the required subroutines is shown above, and our corresponding theorem is formalized below.\\

\begin{theorem}
\label{theorem:ptas}
The algorithm {\sf GetNearOptimalPermutation}:
\begin{enumerate}[label=\roman*.]
    \item returns a $(1+\ve)$-approximation to the Minimum Feedback Arc Set problem on tournaments with constant probability in time $O \left( \text{poly}(n) 2^{\text{poly}(1/\varepsilon)} \right)$, 
    \item requires at most $O^*(n)$ space if executed in $O(\log n)$ passes, and 
    \item can be executed in $p$ passes and $O^*(n^{1 + 1/p})$ space.
\end{enumerate}
\end{theorem}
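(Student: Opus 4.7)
The plan is to prove the theorem in three stages: first establish correctness of {\sf ApproxLocalImprove} as a subroutine that drives the permutation to near-local-optimality with respect to long, significantly cost-improving single vertex moves; second, combine this with Ailon's $\ve$-good partition framework (Fact \ref{fact:ailon}) to conclude the $(1+\ve)$-approximation for {\sf GetNearOptimalPermutation}; and third, account for passes and space by exploiting the disjointness of subproblems at each recursion depth. The central concentration lemma I would prove first states that for any fixed $\pi,u,j$ and a sample set $E_{u,i}$ drawn as in lines 6--11, a Hoeffding bound on the $\Theta(\ve^{-5}\log^5 n)$ samples gives
$$\bigl|\tx{TestMove}_{E_{u,i}}(\pi, V, W, u, j) - \tx{TestMove}(\pi, V, W, u, j)\bigr| \leq O(d\ve/\log n)$$
with probability $1 - 1/\poly(n)$, where $d = |j - \rho_\pi(u)|$. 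A union bound over vertices, the $O(\log n)$ length scales $l \in [B,L]$, and the $\Theta(\ve^{-6}\log^6 n)$ sample-set indices (enough to cover every while-loop iteration) makes this hold simultaneously. Because the sampling distribution only depends on $V$ and not on $\pi$, and each while-loop iteration consumes a fresh index $i$, this concentration applies conditionally on the current permutation at each iteration.

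Next I would establish a parallel-moves lemma quantifying the intuition in Figure \ref{fig:svms}: for the set $M_3$ constructed in line 20,
$$C(\pi, V, W) - C(\pi_{M_3}, V, W) \geq \sum_{(u \to j) \in M_3} \tx{TestMove}(\pi, V, W, u, j) - O(|M_3|^2),$$
since two overlapping moves perturb each other's contribution by only $O(1)$ per shared edge. The $\Theta(\ve^2/\log^2 n)$-subsampling keeps $|M_3|^2$ small enough to be dominated by the linear-in-$d$ cost gain guaranteed by the {\sf GetMoves} threshold, so each iteration strips $\Omega(\ve N^2/\poly(\log n))$ of cost. Since $C(\pi, V, W) = O(N^2)$, the outer while loop terminates within $\poly(\log n/\ve)$ iterations, after which no significant long move remains; this is precisely the local-optimality condition Ailon requires, so the random split at $k \in [N/3, 2N/3]$ yields an $\ve$-good decomposition in expectation. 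Inducting on the recursion, with brute force absorbing $N! = \poly(n)$ when $N \leq \log n/\log\log n$ and {\sf AddApproxMFAS} at $\beta = \ve^3$ handling the high-cost base case (there $\beta N^2 = \ve^3 N^2 \leq \ve\cdot \mathrm{OPT}$, and its runtime contributes the $2^{O(1/\ve^6)}$ factor), yields part (i). Failure probabilities accumulate across $\poly(n)$ randomized calls but remain a small constant by the union bound on high-probability events.

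For parts (ii) and (iii), each iteration of the while loop requires one pass to evaluate the $\tx{TestMove}_{E_{u,i}}$ quantities, so one call to {\sf ApproxLocalImprove} costs $\poly(\log n/\ve)$ passes and $O^*(N)$ space for a subproblem of size $N$. Because subproblems at a common recursion depth are vertex-disjoint, all of them at that depth can be run concurrently over the same stream, giving $O^*(n)$ total space per level and hence part (ii) after summing $\poly(\log n/\ve)$ passes across the $O(\log n)$ levels. For part (iii), I would fuse $\log n/p$ consecutive recursion levels into one pass by pre-drawing, for each vertex, sample pools large enough to serve every depth at which that vertex will appear; since sample-set sizes geometrically shrink down the recursion tree, an $n^{1/p}$ inflation per vertex suffices, yielding $O^*(n^{1+1/p})$ total space. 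The main obstacle will be the dependence bookkeeping for this fusion: the sample sets consulted at an inner recursion level must remain valid unbiased estimators for the permutation at the moment they are used, even though they were drawn during an outer pass before any inner-level move was executed. This should go through because the sampling distribution depends only on $V$, so freezing the samples early commutes with the later adaptive choice of which sample index to consult; verifying this carefully, and confirming that the union bound still covers all pre-committed iterations, is where the argument requires the most care.
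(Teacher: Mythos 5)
Your outline for part~(i) tracks the paper's Lemmas~\ref{lemma:ailon1}--\ref{lemma:line14approxlocalimprove} closely and is essentially correct: the concentration lemma, the union bound over pre-committed sample indices, the parallel-moves perturbation bound, and the ``cost is $O(N^2)$ so the while loop terminates in $\poly(\log n/\ve)$ iterations'' argument are all the right steps, and the appeal to Ailon's $\ve$-good partition machinery together with the two base cases gives the $(1+\ve)$-approximation. (One small omission: before applying Hoeffding you must filter each pre-drawn sample set $E_{u,i}$ to those entries whose rank lands in $[\rho_\pi(u)+1,j]$; since $d$ can be as small as $\Theta(\ve N/\log n)$, only $\Theta(\ve^{-4}\log^4 n)$ of the $\Theta(\ve^{-5}\log^5 n)$ samples will survive, and showing that enough survive with probability $1-O(n^{-6})$ is exactly the paper's Lemma~\ref{lemma:elementsinmove}.) Your part~(iii) plan---inflate the per-vertex sample pool by $n^{1/p}$ and draw once at the topmost ancestor of each band of $\log n/p$ recursion levels, then argue the pre-drawn samples remain valid because they depend only on $V$---matches the paper's argument.

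The genuine gap is in part~(ii). You write that ``each iteration of the while loop requires one pass to evaluate the $\tx{TestMove}_{E_{u,i}}$ quantities, so one call to {\sf ApproxLocalImprove} costs $\poly(\log n/\ve)$ passes,'' and then total $O(\log n)\cdot\poly(\log n/\ve)$ passes. This contradicts the algorithm (and your own earlier description of the pre-drawn pools): lines~5--13 of {\sf ApproxLocalImprove} draw \emph{all} $\Theta(\ve^{-6}\log^6 n)$ sample sets $\{E_{v,i}\}$ for every $v$ during a \emph{single} pass, and that pass also records the orientation $W(v,\cdot)$ for every sampled edge. After that, the while loop never touches the stream again---each iteration simply consumes the next stored index $i$ and re-filters the stored samples against the current permutation. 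Hence a call to {\sf ApproxLocalImprove} is exactly one pass (Lemma~\ref{lemma:onepassapproxlocalimprove}), and summing over $O(\log n)$ recursion levels gives the $O(\log n)$ total passes the theorem claims. Your count would give $O(\log n)\cdot\poly(\log n/\ve)$ passes, which is strictly weaker than part~(ii) as stated, and the same overcount would propagate into part~(iii), where the theorem requires exactly $p$ passes rather than $p\cdot\poly(\log n/\ve)$. The fix is to explicitly observe that the sample sets are drawn and their $W$-values recorded up front, so only the sampling phase of {\sf ApproxLocalImprove} costs a pass, not the optimization loop.
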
 

The top level algorithm {\sf GetNearOptimalPermutation} first obtains a $5$-approximation by sorting by indegree, which can be achieved by storing the indegree of all vertices \cite{coppersmith2006ordering}. It then calls {\sf Recurse}. The base case occurs when the vertex set becomes small enough to use brute force, or when the cost of the minimum feedback arc set becomes large enough to approximate with {\sf AddApproxMFAS} (which is the source of the $O \left( 2^{\text{poly}(1/\varepsilon)} \right)$ factor in the time). Given this recursion tree and the output permutation $\pi^O$, we can divide the cost of the output permutation into two sources:
\begin{enumerate}
    \item Cost incurred at the internal nodes $\mathcal{I}$: Let $X \in \mathcal{I}$ be an internal node of the recursion tree, and let $L, R$ be the left and right child of $X$. Once {\sf Recurse} has divided $V_X$ into $V_L$ and $V_R$, for any $u \in V_L, v \in V_R$, it will be the case that $\rho_{\pi^O}(u) < \rho_{\pi^O}(v)$ where $\pi$ is the output permutation: their relative ordering is now fixed and will never be changed. If $W(v,u) = 1$, then this is a permanent cost that will be attributed to the parent $X$, which we can quantify as 
    $$\beta_X = \sum_{u \in V_L, v \in V_R} W(v,u)$$
    \item Cost incurred at the leaves $\mathcal{L}$: Let $X \in \mathcal{L}$ be a leaf of the recursion tree. The cost incurred by the leaf is given by
    $$\alpha_X = \sum_{u,v \in V_X : \rho_{\pi^O}(u) < \rho_{\pi^O}(v)} W(v,u)$$
\end{enumerate}
Using the above notation, it is easy to see that $C(\pi^O, V, W) = \sum_{X \in \mathcal{I}} \beta_X + \sum_{X \in \mathcal{L}} \alpha_X$.

Note that {\sf SampleAndRank} \cite{ailon2012active} is similar to the {\sf GetNearOptimalPermutation}: the top level function obtains an $O(1)$ approximation, and then cost-improving single vertex moves are applied recursively. The structure of recursion in {\sf SampleAndRank} is identical to the one in {\sf Recurse}, except {\sf SampleAndRank} returns the trivial partition $\{ V \}$ in the base cases, whereas {\sf Recurse} optimizes the cost of the base cases using either brute force or {\sf AddApproxMFAS}. However, the procedure used by the two algorithms to find cost-improving single vertex moves ({\sf ApproxLocalImprove}) is very different. A detailed description of the {\sf SampleAndRank} algorithm is given in Appendix \ref{sec:sar}.

One difference that stands out in particular between the two local optimization procedures is how the sample sets $E_{v,i}$ are defined. In the lines 5-13 of {\sf ApproxLocalImprove}, a sample set $E_{v,i}$ consists of $\Theta(\ve^{-5} \log^5 n)$ edges that start at $v$ and end at a uniformly random vertex in the input set $V$. If $i \neq j$, then $E_{v,i}$ and $E_{v,j}$ are independent and identically distributed random sets. This particular sampling scheme lets the algorithm get away with $O^*(1)$ number of passes and $O^*(1)$ space within one call of {\sf ApproxLocalImprove}.  

Intuitively, given the above similarities between {\sf SampleAndRank} and {\sf GetNearOptimalPermutation}, any statement that is proved by \cite{ailon2012active} for {\sf SampleAndRank} about $\beta_X$ for $X \in \mathcal{I}$, should hold for {\sf GetNearOptimalPermutation} as well, as long as {\sf ApproxLocalImprove} makes all single vertex moves that {\sf SampleAndRank}'s local optimization procedure would make. The key lemma in the query model from \cite{ailon2012active} that is useful is the following:
\begin{lemma}
\label{lemma:beta}
\cite{ailon2012active}
Let $C^*$ be the cost of the optimal permutation, and $C^*_X$ be the optimal permutation when $V, W$ are restricted to the vertices in $X$. Then {\sf SampleAndRank} guarantees
$$E \lb \sum_{X \in \mathcal{I}} \beta_X \rb \leq (1+\ve)C^* - E \lb \sum_{X \in \mathcal{L}} C^*_X \rb$$
where the expectation is taken over the choice of the pivots $k$ (line 12, Algorithm \ref{algorithm:two}: {\sf Recurse}).
\end{lemma}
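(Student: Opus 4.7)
The lemma, after rearrangement, states that $E\lb \sum_{X \in \mathcal{I}} \beta_X + \sum_{X \in \mathcal{L}} C^*_X \rb \leq (1+\ve) C^*$, and the left-hand side is exactly the expected cost of the optimal permutation of $V$ respecting the final partition produced by the recursion (the per-leaf costs are decoupled from the fixed cross-split contributions $\beta_X$). My plan is to prove this by structural induction on the recursion tree: letting $\mathcal{I}(X), \mathcal{L}(X)$ be the internal and leaf nodes of the subtree rooted at $X$, I would establish
$$E\lb \sum_{Y \in \mathcal{I}(X)} \beta_Y + \sum_{Y \in \mathcal{L}(X)} C^*_Y \rb \leq (1+\ve) C^*_X,$$
with the expectation taken over pivots inside the subtree. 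Specializing to the root yields the lemma.

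For the base case ($X$ a leaf), the sum collapses to $C^*_X$ and the bound is immediate. For the inductive step at an internal $X$ with children $L, R$, note that $\pi_1$ (the permutation after {\sf ApproxLocalImprove}) is fixed before the pivot $k$ at $X$ is drawn, so for every $k$,
$$C(\pi_1, V_X, W) = \beta_X(k) + C(\pi_L, V_L, W) + C(\pi_R, V_R, W) \geq \beta_X(k) + C^*_L(k) + C^*_R(k),$$
using that $\pi_L, \pi_R$ are the restrictions of $\pi_1$ and therefore cost at least the respective restricted optima. The guarantee of {\sf ApproxLocalImprove}, namely that no single vertex move remains with TestMove exceeding $\Omega(\ve d / \log n)$, feeds into a Kenyon-Schudy-style lemma to give $C(\pi_1, V_X, W) \leq C^*_X + \text{slack}(X)$ for a small additive slack. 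Chaining these bounds and applying the inductive hypothesis to $L$ and $R$ after taking expectation over $k$ completes the step.

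The main obstacle is that a naive induction using only the multiplicative bound $C(\pi_1, V_X, W) \leq (1+\ve) C^*_X$ compounds across the $O(\log n)$ levels of the recursion tree into $(1+\ve)^{O(\log n)}$. I would bypass this by keeping the local slack purely additive, of the form $\text{slack}(X) = \ve \cdot |V_X|^2 / \text{poly}(\log n)$, and strengthening the inductive invariant to carry the accumulated slack separately, proving
$$E\lb \sum_{Y \in \mathcal{I}(X)} \beta_Y + \sum_{Y \in \mathcal{L}(X)} C^*_Y \rb \leq C^*_X + E\lb \sum_{Y \in \text{subtree}(X)} \text{slack}(Y) \rb.$$
The slacks then telescope over the tree because per-level vertex sets partition $V$ and their sizes geometrically decrease; the two base cases ({\sf AddApproxMFAS} handling nodes with $C^*_X = \Omega(\ve^2 |V_X|^2)$ and brute force handling tiny leaves) further ensure that nodes whose $C^*_X$ is too small to absorb slack never generate any. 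Combined, the total expected slack is bounded by $\ve C^*$, yielding the $(1+\ve)$ factor without compounding. This telescoping argument, together with a careful derivation of the Kenyon-Schudy-style local-optimality bound valid when {\sf ApproxLocalImprove} only sees sample-based estimates of the cost improvements, is where I expect the technical core of the proof to lie.
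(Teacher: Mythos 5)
The paper does not give a proof of this lemma; it cites it verbatim as a black-box result from Ailon (2012), so there is no internal proof to compare your sketch against, and I evaluate it on its own terms.

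Your setup is sound: the rearrangement of the inequality, the per-pivot decomposition $C(\pi_1, V_X, W) = \beta_X(k) + C(\pi_L, V_L, W) + C(\pi_R, V_R, W) \geq \beta_X(k) + C^*_L(k) + C^*_R(k)$, the structural induction over the recursion tree, and the recognition that a naive multiplicative per-level bound compounds are all correct observations. The gap is in the additive-slack fix. The claim $C(\pi_1, V_X, W) \leq C^*_X + \mathrm{slack}(X)$ with $\mathrm{slack}(X) \approx \ve |V_X|^2/\mathrm{poly}(\log n)$ is not a consequence of approximate local optimality with respect to single vertex moves. On its own, single-vertex-move local optimality yields at best a constant-factor approximation; if it implied an additive-$o(C^*_X)$ guarantee, the recursion would be unnecessary. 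Moreover the magnitudes are wrong: {\sf Recurse} enters {\sf ApproxLocalImprove} only when $C_E(\pi) = O(\ve^2 N^2)$, which forces $C^*_X = O(\ve^2 N^2)$ at every internal node. So at precisely the nodes that generate slack, $C^*_X$ is of the same order as, or smaller than, the slack you want it to absorb — your observation that {\sf AddApproxMFAS} handles the nodes with $C^*_X = \Omega(\ve^2|V_X|^2)$ cuts against your argument, not for it. Summing the per-node slack over the tree then yields $\Theta(\ve^2 n^2)$ in the worst case, which need not be $O(\ve C^*)$ on low-cost instances. The step you flag as "the technical core" is indeed where the real work lies, but it should not take the form $C(\pi_1) \leq C^*_X + \mathrm{slack}(X)$: the Kenyon–Schudy/Ailon argument instead bounds $E_k[\beta_X]$ directly against the cross-split cost that the global optimum incurs under the same random pivot, using local optimality to control how far $\pi_1$ can misplace vertices relative to $\pi^*$, so that the per-node excess is charged to quantities that genuinely telescope against $C^*$ rather than against $|V_X|^2$.
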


Again, in order to use the above lemma, an equivalence between the terminating condition of {\sf ApproxLocalImprove} and the terminating condition of {\sf SampleAndRank}'s local optimization procedure must be shown: if {\sf ApproxLocalImprove} outputs $\pi$, then there cannot exist a \textbf{strong single vertex move} on $\pi$ anymore. A strong single vertex move $(u \to j)$ has the following two properties:
\begin{itemize}
    \item It is long : $|j - \rho_{\pi}(u)| \geq \Theta(\ve N / \log n)$, and
    \item It is effective : $\text{TestMove}(\pi, V, W, u, j) \geq \Theta(\ve |j - \rho_{\pi}(u)| / \log n)$.
\end{itemize}
As long as there is no single vertex move satisfying both the above properties, the algorithm would obey the terminating condition of {\sf SampleAndRank}'s local optimization procedure. Proving this statement nearly establishes the first part of Theorem \ref{theorem:ptas}, and that is the key objective of the following section.

\subsection{Proof of Correctness}
\label{section:correct}

Let $d = |j - \rho_{\pi}(u)|$ and $l = \lceil \log d \rceil$ for the single vertex move $(u \to j)$ on permutation $\pi$, and let $\alpha$ be an arbitrary integer in the range $[1, \Theta(\ve^{-6} \log^6 n)]$. The lemmas shown below outline the structure of the justification for the first part of Theorem \ref{theorem:ptas}, along with their proofs. The following is a brief overview of the proof:

\begin{itemize}
    \item When moves are made in parallel, and approximated using the samples $E_{v,i}$ (generated in lines 5-13 of Algorithm \ref{algorithm:five} : {\sf ApproxLocalImprove}), we want to ensure that the realized cost of a single move is at least $\Theta(\ve^2 N / \log^2 n)$. Lemmas \ref{lemma:ailon1}, \ref{lemma:elementsinmove}, \ref{lemma:testmoveapproxerror}, \ref{lemma:testmoveapproxerrorall} show that the samples approximate the cost improvement of these moves well and all approximations hold with constant probability. Lemmas \ref{lemma:testmoveparallel} and \ref{lemma:realizedimprovement} show that strong single vertex moves have a high realized cost improvement, even if they are made in parallel and approximated using samples.
    \item On a vertex set of size $N$, the maximum feedback arc set size can be $O(N^2)$. If a move is strong, its realized cost improvement is $\Theta(\ve^2 N / \log^2 n)$ as we saw. Now, it is possible that by making all these moves, we create the opportunity for a new strong single vertex move that did not exist earlier. However, lemmas \ref{lemma:enoughsvms} and \ref{lemma:enoughsvms-fascost} show that the overall cost improvement needed to create these new strong single vertex moves is $\Omega(\ve^4 N^2 / \log^4 n)$. 
    \item Given that the maximum cost of a feedback arc set is $\Theta(N^2)$, these new strong moves can only be created $O(\ve^{-4} \log^4 n)$ times. Therefore, as long as the loop on line 15 in {\sf ApproxLocalImprove}, there will be no strong single vertex moves left, and the local optimization can safely end. This is shown in lemmas \ref{lemma:line15approxlocalimprove} and \ref{lemma:line14approxlocalimprove}. 
    \item Finally, in the next section it is formally shown why satisfying the terminating condition of ``no strong single vertex moves remaining'' is sufficient to finish the proof. 
\end{itemize}

We can now delve into the details of this proof.

\begin{lemma}
\label{lemma:ailon1}
[Ailon \cite{ailon2012active}] Let $E \subseteq \binom{V}{2}$ be a random multi-set of size $m$ with elements $(u, v_1), \ldots, (u, v_m)$ such that for each vertex $v_i$, its rank $\rho_{\pi}(v_i)$ is between $\rho_{\pi}(u)$ (exclusive) and $j$ (inclusive) and is chosen uniformly at random in this range. Then  
$$\ex[\tx{TestMove}_E(\pi, V, W, u, j)] = \tx{TestMove}(\pi, V, W, u, j)$$
and additionally, for any $\delta > 0$, with probability of failure $\delta$ it holds that 
$$|\tx{TestMove}_E(\pi, V, W, u, j) - \tx{TestMove}(\pi, V, W, u, j)| = O \lp d \sqrt{\frac{\log \delta^{-1}}{m}}\rp$$
\end{lemma}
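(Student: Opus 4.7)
My plan is to view $\tx{TestMove}_E(\pi,V,W,u,j)$ as a scaled empirical average and then apply a standard concentration inequality. Without loss of generality assume $j \geq \rho_\pi(u)$ and set $d = j - \rho_\pi(u)$, and write $R = \{v : \rho_\pi(v) \in [\rho_\pi(u)+1, j]\}$, so $|R| = d$. Define the random variable $X_i = W(v_i,u) - W(u,v_i)$ where $v_i$ is the second coordinate of the $i$-th sampled pair; since the underlying graph is a tournament, $X_i \in \{-1,+1\}$. By the hypothesis on $E$, the $v_i$'s are i.i.d.\ uniform on $R$, so the $X_i$'s are i.i.d.\ with
$$\ex[X_i] = \frac{1}{d} \sum_{v \in R} \bigl(W(v,u) - W(u,v)\bigr) = \frac{1}{d}\, \tx{TestMove}(\pi,V,W,u,j).$$

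Since each sampled $v_i$ lies in $R$, the filtered set $\tilde E$ in Definition \ref{definition:testmoveapprox} equals $E$ itself, so $|\tilde E| = m$ and
$$\tx{TestMove}_E(\pi,V,W,u,j) = \frac{d}{m} \sum_{i=1}^m X_i.$$
Taking expectations and using linearity immediately gives $\ex[\tx{TestMove}_E(\pi,V,W,u,j)] = \tx{TestMove}(\pi,V,W,u,j)$, which establishes the first claim.

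For the concentration claim, I would invoke Hoeffding's inequality for the i.i.d.\ bounded sum $\sum_{i=1}^m X_i$: since each $X_i - \ex[X_i]$ lies in an interval of length at most $2$, for any $t > 0$,
$$\Pr\!\left[\left|\sum_{i=1}^m \bigl(X_i - \ex[X_i]\bigr)\right| > t\right] \leq 2 \exp\!\left(-\frac{t^2}{2m}\right).$$
Choosing $t = \sqrt{2m \log(2/\delta)}$ makes the right-hand side at most $\delta$, and multiplying by the scaling factor $d/m$ yields
$$\bigl|\tx{TestMove}_E(\pi,V,W,u,j) - \tx{TestMove}(\pi,V,W,u,j)\bigr| \leq d\sqrt{\frac{2\log(2/\delta)}{m}} = O\!\left(d\sqrt{\frac{\log \delta^{-1}}{m}}\right)$$
with probability at least $1-\delta$, as required.

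There is no real obstacle here: the lemma is essentially a Hoeffding bound applied to a scaled sample average, and the one thing worth double-checking is that the sampling model in the hypothesis ensures $\tilde E = E$ (so the random normalization $|\tilde E|$ coincides with the deterministic $m$), which removes any ratio-estimator complication and lets the proof go through cleanly.
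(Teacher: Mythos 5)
Your proof is correct and follows the same standard route the paper (following Ailon) uses: observe that the sampling hypothesis forces $\tilde E = E$, express $\tx{TestMove}_E$ as a $d/m$-scaled i.i.d.\ sum of $\pm 1$ random variables, check the mean, and apply Hoeffding. No gaps; the one subtlety you flag (that $|\tilde E| = m$ deterministically under this sampling model, avoiding a ratio-estimator issue) is exactly the point worth noting.
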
 
\begin{proof}

This follows by a Hoeffding bound and the fact that $|W(u,v)| \leq 1$ for all $u,v$.
\end{proof}

\begin{lemma}
\label{lemma:elementsinmove}
Given single vertex move $(u \to j)$ on permutation $\pi$ and sample set $E_{u,\alpha}$, \\$\tx{GetSample}(\pi, E_{u,\alpha}, u, j)$ has at least $\Omega(\ve^{-4} \log^4 n)$ elements that are drawn uniformly at random between $\rho_{\pi}(u)$ (exclusive) and $j$ (inclusive), with $1 - O(n^{-6})$ probability of success.
\end{lemma}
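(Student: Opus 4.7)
The plan is to reduce the claim to a routine Chernoff argument on the construction of $E_{u,\alpha}$. From lines 5--13 of {\sf ApproxLocalImprove}, the set $E_{u,\alpha}$ consists of $M = \Theta(\ve^{-5}\log^5 n)$ independent samples, each of the form $(u, \pi(r))$ for $r$ drawn uniformly at random from $[1, N]$. The routine $\tx{GetSample}(\pi, E_{u,\alpha}, u, j)$ retains exactly those samples whose second coordinate has rank in $(\rho_\pi(u), j]$, so each original sample is retained independently with probability $p = d/N$, where $d = |j - \rho_\pi(u)|$.

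Since $\tx{GetSample}$ is only invoked on moves $(u \to j)$ returned by {\sf GetMoves}, we have $\lceil \log d \rceil \geq B = \lceil \log(\Theta(\ve N/\log n)) \rceil$, and hence $d \geq \Omega(\ve N/\log n)$ and $p \geq \Omega(\ve/\log n)$. The expected number of retained samples is therefore
$$\mu \;:=\; Mp \;\geq\; \Omega(\ve^{-5}\log^5 n) \cdot \Omega(\ve/\log n) \;=\; \Omega(\ve^{-4}\log^4 n).$$
Letting $X$ denote the (random) count of retained samples, $X$ is a sum of $M$ independent $\tx{Bernoulli}(p)$ indicators, and the multiplicative Chernoff bound gives
$$\pr\lb X \leq \mu/2 \rb \;\leq\; \exp(-\mu/8) \;\leq\; \exp\!\lp -\Omega(\ve^{-4}\log^4 n) \rp,$$
which is comfortably $O(n^{-6})$ since $\ve^{-4}\log^4 n$ dominates $\log n$ by polylog factors for $n$ sufficiently large.

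For the uniformity part, I would invoke the standard observation that conditioning a uniform draw from $[1,N]$ on landing inside $(\rho_\pi(u), j]$ yields a uniform draw on that interval, and by independence of the $M$ original samples these conditional draws are mutually independent. Thus, on the high-probability event $\{X \geq \mu/2\}$, $\tx{GetSample}$ outputs at least $\Omega(\ve^{-4}\log^4 n)$ i.i.d.\ uniform samples from the required range, matching exactly the hypothesis of Lemma \ref{lemma:ailon1}. The one real bookkeeping point is that the sample size $\Theta(\ve^{-5}\log^5 n)$ in the construction of $E_{u,\alpha}$ is tuned precisely so that $\mu$ is simultaneously large enough to (a) meet the $\Omega(\ve^{-4}\log^4 n)$ count required by the downstream analysis and (b) yield the $1 - O(n^{-6})$ failure bound; the main technical care is therefore just matching polylog exponents, and there is no deeper obstacle.
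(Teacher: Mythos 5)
Your proof is correct and takes essentially the same approach as the paper: reduce to a binomial count via indicator variables for whether each element of $E_{u,\alpha}$ lands in the range $(\rho_\pi(u), j]$, use the constraint $l \geq B$ to lower-bound $d \geq \Omega(\ve N/\log n)$ so that the expected count is $\Omega(\ve^{-4}\log^4 n)$, and apply a multiplicative Chernoff bound to get the $1 - O(n^{-6})$ success probability, with uniformity following from conditioning a uniform draw on a subinterval.
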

\begin{proof}

Let $Y = |\tx{GetSample}(\pi, E_{u,\alpha}, u, j)|$. Let $Y_i$ be an indicator such that $Y_i = 1$ iff the $i$-th sample $(u, v_i) \in E_{u,\alpha}$ is such that $v_i$ lies in the range of the single vertex move $(u \to j)$. Therefore $Y = \sum Y_i$, and $\ex[Y] = \Theta \lp N^{-1} \ve^{-5} d \log^5 n \rp \geq \Theta \lp \ve^{-4} \log^4 n \rp$. Using a Chernoff bound for binomial random variables we get $$\pr(Y \leq \Theta(\ve^{-4}\log^4 n)) \leq e^{-\Theta(\ve^{-4}\log^4 n)} \leq \Theta(n^{-6})$$
as desired. Note that since all elements were picked uniformly at random with repetition in $E_{u,\alpha}$, they are still uniformly random in the required range.
\end{proof}

The above two lemmas are useful for proving that these samples $E_{v,\alpha}$ can be properly utilized to approximate the cost improvement of single vertex moves, as is shown by the next lemma.

\begin{lemma}
\label{lemma:testmoveapproxerror}
Let $\hat{E} = \tx{GetSample}(\pi, E_{u,i}, u, j)$. With probability of success $1 - O(n^{-6})$,$$|\tx{TestMove}_{\hat{E}}(\pi, V, W, u, j) - \tx{TestMove}(\pi, V, W, u, j)| \leq \frac18 d \ve / \log n$$
\end{lemma}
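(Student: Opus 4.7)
The plan is to derive the bound by chaining Lemmas \ref{lemma:ailon1} and \ref{lemma:elementsinmove} through a union bound, treating the former as a concentration statement and the latter as a distributional guarantee. First I would apply Lemma \ref{lemma:elementsinmove} to conclude that, with probability at least $1 - O(n^{-6})$, the multiset $\hat{E} = \tx{GetSample}(\pi, E_{u,i}, u, j)$ contains at least $m = \Omega(\ve^{-4} \log^4 n)$ pairs whose second coordinates are drawn independently and uniformly from the rank window $(\rho_\pi(u), j]$ of width $d$. Call this event $\mathcal{A}$.

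Conditioned on $\mathcal{A}$, the sample $\hat{E}$ satisfies the hypothesis of Lemma \ref{lemma:ailon1}, which I would then invoke with failure parameter $\delta = n^{-6}$ to obtain
\[
\left| \tx{TestMove}_{\hat{E}}(\pi, V, W, u, j) - \tx{TestMove}(\pi, V, W, u, j) \right| = O\lp d \sqrt{\tfrac{\log n}{m}} \rp = O\lp \tfrac{d \ve^2}{\log^{3/2} n}\rp
\]
with probability at least $1 - n^{-6}$. Since $\ve < 1$, the right-hand side is bounded by a constant multiple of $d\ve/\log n$, and by taking the hidden constant in the $\Omega(\ve^{-4} \log^4 n)$ bound of Lemma \ref{lemma:elementsinmove} large enough (which is ultimately controlled by the $\Theta(\ve^{-5} \log^5 n)$ sample size parameter chosen in {\sf ApproxLocalImprove}), this constant factor can be made at most $\tfrac{1}{8}$. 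A union bound over the $O(n^{-6})$ failure probabilities of $\mathcal{A}$ and of the Lemma \ref{lemma:ailon1} concentration event yields the claim with probability $1 - O(n^{-6})$.

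The only subtlety, and essentially the only thing that requires care, is verifying the distributional hypothesis of Lemma \ref{lemma:ailon1}: its error bound needs the second coordinates of $\hat{E}$ to be uniform on the rank window, whereas $\hat{E}$ is obtained by filtering the precomputed $E_{u,i}$, which was drawn uniformly over all of $V$ rather than over $(\rho_\pi(u), j]$. This is exactly what Lemma \ref{lemma:elementsinmove} certifies, so once that lemma is in hand the rest reduces to a direct substitution and routine constant-chasing; no fresh probabilistic argument is required.
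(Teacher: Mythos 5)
Your proposal is correct and follows essentially the same structure as the paper: invoke Lemma~\ref{lemma:elementsinmove} to certify that $\hat E$ contains $m=\Omega(\ve^{-4}\log^4 n)$ uniform samples in the rank window, then plug into the concentration bound of Lemma~\ref{lemma:ailon1} with $\delta=n^{-6}$, and observe that $O(d\sqrt{\log n/m})=O(d\ve^2/\log^{3/2}n)\le \tfrac18 d\ve/\log n$ after constant-chasing, finishing with a union bound. Your remark that the filtered samples remain uniform on the window is exactly the subtle point that needs stating, and it is also noted in the paper.
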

\begin{proof}

By Lemma \ref{lemma:elementsinmove} we have $|\hat{E}| = \Omega(\ve^{-4} \log^4 n)$ with at least $1 - O(n^{-6})$ probability. Setting the probability of failure $\delta = e^{-\ve^{-2}\log^2 n}$, by Lemma \ref{lemma:ailon1} we get
$$|\tx{TestMove}_{\hat{E}}(\pi, V, W, u, j) - \tx{TestMove}(\pi, V, W, u, j)| = O \lp d \sqrt{\frac{\ve^{-2}\log^2 n}{\ve^{-4} \log^4 n}} \rp = O(d\ve/\log n)$$
Note that $\delta$ can be upper bounded by $O(n^{-6})$. Therefore the claim is true with probability at least $1 - O(n^{-6})$.
\end{proof}

The high probability guarantee of a decent approximation for a single vertex move must be extended to all single vertex moves, across all nodes, with a constant probability, in order to prove that the overall algorithm works with constant probability (say $0.99$).

\begin{lemma}
\label{lemma:testmoveapproxerrorall}
All sampling approximations of TestMove in Lemma \ref{lemma:testmoveapproxerror} succeed simultaneously with constant probability.
\end{lemma}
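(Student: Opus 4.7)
The plan is to apply a union bound over every TestMove approximation performed during the execution of {\sf GetNearOptimalPermutation}. By Lemma \ref{lemma:testmoveapproxerror}, each individual approximation succeeds with probability at least $1 - O(n^{-6})$, so the only work is to count the total number $T$ of such approximations and verify that $T = O(n^c)$ for some $c$ comfortably smaller than $6$, making $T \cdot O(n^{-6})$ arbitrarily small.

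First I would bound the number of TestMove approximations inside a single call to {\sf ApproxLocalImprove} on a vertex set of size $N$. Each invocation of {\sf GetMoves} computes $\tx{TestMove}_{E_{u,i}}(\pi, V, W, u, j)$ for every $u \in V$ and every $j \in [n]$, so at most $O(nN)$ TestMove approximations are performed per invocation of {\sf GetMoves}. The inner \textbf{for}-loop iterates $\Theta(\ve^{-2} \log^2 n)$ times and the outer \textbf{while}-loop terminates in $\poly(\log n / \ve) = O^*(1)$ iterations (as discussed just before this lemma), and each iteration calls {\sf GetMoves} a constant number of times. Therefore one call to {\sf ApproxLocalImprove} contributes at most $O^*(nN)$ TestMove approximations.

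Next, I would sum over the recursion tree generated by {\sf Recurse}. The tree has depth $O(\log n)$, and at any fixed depth the vertex sets at the nodes are pairwise disjoint subsets of $V$, so $\sum_X N_X \leq n$ at each level. Summing across levels yields $\sum_X N_X = O(n \log n)$, and hence
\[
T \;\leq\; \sum_X O^*(n N_X) \;=\; O^*(n^2).
\]
The union bound then gives overall success probability at least $1 - O^*(n^2) \cdot O(n^{-6}) = 1 - O^*(n^{-4}) \geq 0.99$ for sufficiently large $n$.

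The only delicate point I anticipate is making sure the constants line up: the $O(n^{-6})$ per-approximation failure probability in Lemma \ref{lemma:testmoveapproxerror} arises from a Hoeffding-type bound whose constant is controlled by the sizes $\Theta(\ve^{-5}\log^5 n)$ and $\Theta(\ve^{-6}\log^6 n)$ in the definitions of the $E_{v,i}$. As long as the constants hidden in those $\Theta$'s are chosen large enough to dominate the constant in the polynomial count $O^*(n^2)$, the union bound closes. This is essentially bookkeeping rather than a genuine obstacle, so the lemma follows directly once the two counting steps above are made precise.
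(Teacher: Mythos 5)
Your union-bound approach is correct and is almost certainly the same argument the paper uses; the count of $O^*(n^2)$ TestMove evaluations and the $O(n^{-6})$ per-evaluation failure probability do close cleanly, and the conclusion that the overall failure probability is $O^*(n^{-4})$, hence constant for large $n$, is sound.

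One point worth tightening: you justify the ``$O^*(1)$ iterations of the outer while-loop'' by citing a discussion that, in the paper's ordering, corresponds to Lemmas \ref{lemma:line15approxlocalimprove} and \ref{lemma:line14approxlocalimprove}, which appear \emph{after} this lemma and whose proofs rely (via Lemma \ref{lemma:realizedimprovement}) on the very event you are union-bounding over. To avoid the appearance of circularity, the iteration count should instead be obtained directly from the algorithm's construction: {\sf ApproxLocalImprove} precomputes only $\Theta(\ve^{-6}\log^6 n)$ sample sets $E_{v,i}$ per vertex and increments $i$ on every inner iteration, so the loop is a priori capped at $\Theta(\ve^{-6}\log^6 n)$ passes through the body regardless of whether the approximations succeed. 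This gives you the same $O^*(1)$ bound without any dependence on the downstream lemmas, and the rest of your argument then goes through unchanged. (Your $O(nN)$ count per {\sf GetMoves} call versus the arguably more natural $O(N^2)$ is immaterial, since both sum to $O^*(n^2)$ over the recursion tree.)
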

\begin{proof}

There are $O(n)$ many calls to {\sf ApproxLocalImprove} with high probability. Since the loop on line 15 in {\sf ApproxLocalImprove} can be executed at most $O(n^2)$ times and needs to correctly approximate $O(n^2 \poly{\log n / \ve})$ possible moves, the total number of TestMove approximations we need is $O(n^5 \poly{\log n / \ve})$. Using a union bound on Lemma \ref{lemma:testmoveapproxerror}, the approximations hold with an arbitrary constant probability (say $0.99$).
\end{proof}


Being able to make one single vertex move using few samples is useful, but in order to obtain $O(\tx{poly} (\log n / \ve))$ many samples for each vertex, multiple single vertex moves need to be performed at a time. If two moves are made simultaneously, they affect each other's cost only by $O(1)$, which makes the following lemma possible.

\begin{lemma}
\label{lemma:testmoveparallel}
Let TestMove$^M(\pi, V, W, u, j)$ be the realized cost of the single vertex move $u \to j$ when all the moves in $M$ are made in parallel. Then
$$|\tx{TestMove}^M(\pi, V, W, u, j) - \tx{TestMove}(\pi, V, W, u, j)| < \frac18 d \ve / \log n$$
\end{lemma}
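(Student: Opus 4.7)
The plan is to express both $\tx{TestMove}^M$ and $\tx{TestMove}$ as sums over comparable vertex sets, bound their difference by a symmetric-difference count, and then control that count via the sampling structure of $M = M_3$ in {\sf ApproxLocalImprove}.

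First, I would unfold the definition: letting $M' := M \setminus \{(u \to j)\}$ and $\pi' := \pi_{M'}$, we have $\tx{TestMove}^M(\pi, V, W, u, j) = C(\pi_M, V, W) - C(\pi', V, W)$, which is just the cost improvement of applying the single move $(u \to j)$ on top of $\pi'$. Expanding as in Definition \ref{definition:testmove} and assuming WLOG $j \geq \rho_\pi(u)$,
\[
\tx{TestMove}^M(\pi, V, W, u, j) = \sum_{v \in S^M} \bigl(W(v,u) - W(u,v)\bigr), \quad S^M := \{v : \rho_{\pi'}(v) \in [\rho_{\pi'}(u)+1, j]\},
\]
paralleling $\tx{TestMove}(\pi, V, W, u, j) = \sum_{v \in S}(W(v,u) - W(u,v))$ with $S := \{v : \rho_\pi(v) \in [\rho_\pi(u)+1, j]\}$. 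Since each summand has magnitude at most $1$, the error is bounded by $|S \triangle S^M|$.

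Next, I would argue that each move $(v \to k) \in M'$ contributes at most $O(1)$ to $|S \triangle S^M|$. Applying $(v \to k)$ to $\pi$ changes the membership of exactly one vertex, namely $v$, in the crossing range for $(u \to j)$ (either moving $v$ in or out), and shifts $\rho_{\pi'}(u)$ relative to $\rho_\pi(u)$ by at most $1$, which can add or remove at most one boundary vertex. Summing over all moves in $M'$ gives $|S \triangle S^M| \leq c \cdot N_{\tx{int}}$ for an absolute constant $c$, where $N_{\tx{int}}$ is the number of moves in $M'$ whose source or destination lies in (or immediately adjacent to) $[\rho_\pi(u), j]$.

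Finally, I would bound $N_{\tx{int}}$. Since $M_3$ has distinct source vertices, at most $d + 2$ of its moves can have source in the relevant window, contributing at most $d + 2$ to $N_{\tx{int}}$ deterministically. For destination-side interactions, the sampling step of {\sf ApproxLocalImprove} (line 21) reduces $M_2$ by a factor of $\Theta(\ve^2 \log^{-2} n)$; since every move in $M_2$ has length at least $2^B = \Omega(\ve N / \log n)$ and destinations must lie in a window of size $d + 2$, a counting argument together with the sampling rate shows that the expected number of such interactions is $O(d \ve / \log n)$ when $d \geq 2^B$. A Chernoff bound and a union bound over the $\tx{poly}(n / \ve)$ candidate moves and iterations of {\sf ApproxLocalImprove} yields, with probability $1 - 1/\tx{poly}(n)$, $N_{\tx{int}} < \frac{d \ve}{16 c \log n}$ uniformly, establishing the desired bound $|S \triangle S^M| < \frac{1}{8} d \ve / \log n$. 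The main obstacle will be the destination-side counting, which requires carefully exploiting that moves of length $\geq 2^B$ cannot all crowd into a short interval without either collapsing to a common destination (which is handled by sampling) or running out of vertices for their sources.
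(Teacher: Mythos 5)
Your setup is on the right track: expressing both quantities as sums over the vertices crossed by $(u \to j)$, bounding the difference by the symmetric difference $|S \triangle S^M|$, and observing that each move in $M' = M \setminus\{(u\to j)\}$ perturbs this set by $O(1)$ are all sound steps that match the paper's high-level idea. But your attempt to bound $N_{\mathrm{int}}$ by classifying moves by where their endpoints lie breaks down in two ways, and moreover it is unnecessary: the lemma follows from a much coarser count.

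First, the internal inconsistency: you assert that moves with source in the window contribute "at most $d+2$ to $N_{\mathrm{int}}$ deterministically," and then conclude $N_{\mathrm{int}} < \frac{d\ve}{16c\log n}$. These cannot both hold, since $d+2 \gg \frac{d\ve}{16c\log n}$ for small $\ve$ and large $n$. The bound $d+2$ comes from counting positions in the window, but the relevant quantity is the number of moves that can land there, which is capped by $|M_3|$, not by $d$. Second, your taxonomy of "source or destination in $[\rho_\pi(u),j]$" omits moves that hop over the window entirely (source to the left of $\rho_\pi(u)+1$, destination to the right of $j$, or vice versa). Such a move shifts $\rho_{\pi'}(u)$ by one and thereby adds or removes a boundary vertex from $S^M$, so it does contribute, even though neither endpoint is in the window; any classification that excludes these moves is incomplete.

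The intended argument is much simpler and avoids both issues. Each move in $M'$ changes $|S \triangle S^M|$ by at most a constant (the source enters or leaves the window, and $\rho_{\pi'}(u)$ shifts by at most one, moving at most one vertex across the left boundary; $j$ is an absolute rank and does not shift). So $|\tx{TestMove}^M - \tx{TestMove}| \le c\,|M'|$ for an absolute constant $c$. Then bound $|M'|$ directly: $M_3$ is a $\Theta(\ve^2/\log^2 n)$-fraction sample with distinct sources, so $|M'| \le |M_3| - 1 = O(\ve^2 N/\log^2 n)$. Since every move surviving {\sf GetMoves} has $l \in [B,L]$, we have $d \ge 2^{B-1} = \Omega(\ve N/\log n)$, hence $|M'| = O\!\left(\frac{\ve^2 N}{\log^2 n}\right) = O\!\left(\frac{\ve\, d}{\log n}\right)$. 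Choosing the constant in the $\Theta(\ve^2\log^{-2}n)$ sampling fraction small enough relative to $c$ and to the constant hidden in $2^B = \Theta(\ve N/\log n)$ gives $c\,|M'| < \frac18 d\ve/\log n$, which is the claim. No window-membership case analysis or destination-side concentration argument is needed.
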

\begin{proof}

Doing two single vertex moves $u \to j$ and $v \to i$ (such that $u \neq v$) simultaneously affects the cost of another move by $\Theta(1)$. Therefore doing $\Theta(\ve N / \log n)$ moves simultaneously (line 22, {\sf ApproxLocalImprove}) affects the cost of a single vertex move $(u \to j)$ by $\Theta(\ve^2 N / \log^2 n)$. Since $l \geq B$, we get $d \geq \Theta(\ve N / \log n)$, which implies that the realized cost improvement of $(u \to j)$ would be changed by $O(\ve d / \log n)$.
\end{proof}


The above lemmas are critical in demonstrating that only $O(\tx{poly}(\log n / \ve))$ sample sets per vertex are required. In the following, Lemma \ref{lemma:testmoveapproxerrorall} and Lemma \ref{lemma:testmoveparallel} are used as a base case to show that every move made achieves a significant cost improvement. Since the maximum feedback arc set size is $O(N^2)$, this leads to a bound on the number of moves that need to be made.

\begin{lemma}
\label{lemma:realizedimprovement}
Each move made in {\sf ApproxLocalImprove} has a realized cost improvement of $\Omega(\ve d / \log n)$.
\end{lemma}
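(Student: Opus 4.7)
The plan is to combine three estimates via the triangle inequality: the sample-based filter defining $M_2$, the sampling-accuracy bound from Lemmas \ref{lemma:testmoveapproxerror}--\ref{lemma:testmoveapproxerrorall}, and the parallel-interference bound from Lemma \ref{lemma:testmoveparallel}. First I would fix any iteration of the inner {\bf for}-loop of {\sf ApproxLocalImprove} with current permutation $\pi$ and current sample index $i$, and consider any move $(u \to j) \in M_3 \subseteq M_2$ with length $d = |j - \rho_{\pi}(u)|$.

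The crucial observation setting up the argument is that the sample set $E_{u,i}$ used to compute $M_2$ in this iteration is independent of the current $\pi$. The moves that shaped $\pi$ were selected using sample sets $E_{u,i'}$ for $i' < i$, and the outer loop on lines 5--14 draws the $\{E_{v,i}\}$ i.i.d.\ across values of $i$. Conditioning on the constant-probability success event of Lemma \ref{lemma:testmoveapproxerrorall}, every TestMove approximation used inside the algorithm is therefore within an additive $\tfrac18 d \ve / \log n$ of its true value.

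The main computation is then a short chain of inequalities. By the condition defining $M_2$ on line 18 we have $\tx{TestMove}_{E_{u,i}}(\pi, V, W, u, j) > \tfrac12 \ve d / \log n$, and Lemma \ref{lemma:testmoveapproxerror} converts this into
$$\tx{TestMove}(\pi, V, W, u, j) > \tfrac12 \ve d / \log n - \tfrac18 \ve d / \log n = \tfrac38 \ve d / \log n.$$
Invoking Lemma \ref{lemma:testmoveparallel} with $M = M_3$---applicable because the $\Theta(\ve^2 \log^{-2} n)$-fraction subsampling on line 20 keeps $|M_3|$ small enough for its hypothesis---gives $|\tx{TestMove}^{M_3}(\pi, V, W, u, j) - \tx{TestMove}(\pi, V, W, u, j)| < \tfrac18 \ve d / \log n$, so
$$\tx{TestMove}^{M_3}(\pi, V, W, u, j) > \tfrac38 \ve d / \log n - \tfrac18 \ve d / \log n = \tfrac14 \ve d / \log n = \Omega(\ve d / \log n).$$

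The main obstacle I foresee is the bookkeeping of independence: because $\pi$ evolves using randomness from earlier iterations, one must verify carefully that $E_{u,i}$ is indeed independent of $\pi$ at the moment of use, so that Lemma \ref{lemma:testmoveapproxerror}'s hypothesis on $E$ being a uniform multi-set over the relevant rank interval is satisfied. This is handled structurally by the increment on line 23 together with the i.i.d.\ sampling on lines 5--14, but it must be spelled out to justify taking a union bound across all iterations and moves in Lemma \ref{lemma:testmoveapproxerrorall}. Beyond this point the proof is essentially a one-line triangle inequality on top of previously-established lemmas.
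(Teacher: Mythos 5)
Your proposal is correct and matches the paper's argument: the $M_2$ filter guarantees $\tx{TestMove}_{E_{u,i}} > \tfrac12 \ve d / \log n$, and two triangle-inequality steps via Lemma~\ref{lemma:testmoveapproxerror} (sampling error $\le \tfrac18 \ve d/\log n$) and Lemma~\ref{lemma:testmoveparallel} (parallel-interference error $< \tfrac18 \ve d/\log n$) yield $\tx{TestMove}^{M_3} > \tfrac14 \ve d / \log n = \Omega(\ve d/\log n)$. Your attention to the independence of $E_{u,i}$ from the current $\pi$ (ensured by the increment on line 23 and the i.i.d.\ sampling) is exactly the point the paper relies on to justify invoking Lemma~\ref{lemma:testmoveapproxerror}.
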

\begin{proof}

If $u \to j$ is a move made by our algorithm then we have 
$$\tx{TestMove}_{E_{v,\alpha}}(\pi, V, W, u, j) > \frac12 \ve d / \log n$$
By Lemma \ref{lemma:testmoveapproxerrorall} and Lemma \ref{lemma:testmoveparallel} and the triangle inequality, we get 
$$|\text{TestMove}_{E_{u,\alpha}}(\pi, V, W, u, j) - \text{TestMove}^M(\pi, V, W, u, j)| < \frac14 \ve d / \log n$$ Therefore we get $\text{TestMove}^M(\pi, V, W, u, j)| > \frac14 \ve d / \log n$ as desired. 
\end{proof}


\begin{lemma}
\label{lemma:enoughsvms}
$\Omega(\ve d / \log n)$ single vertex moves are required to create new moves $(u \to j)$ such that TestMove$(\pi, V, W, u, j) > \frac78 \ve d / \log n$.
\end{lemma}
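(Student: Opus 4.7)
The plan is to establish a per-move bound on how much $\tx{TestMove}(\pi, V, W, u, j)$ can change when a single vertex move is applied to $\pi$, and then to convert this into a lower bound on the number of moves required to push TestMove above the threshold $\tfrac{7}{8}\ve d/\log n$. Recall
\[
\tx{TestMove}(\pi, V, W, u, j) \;=\; \sum_{v\,:\,\rho_\pi(v)\,\in\,[\rho_\pi(u)+1,\,j]} \bigl(W(v,u)-W(u,v)\bigr),
\]
i.e., a signed sum of $\pm 1$ terms over the set $S(\pi)$ of vertices currently sitting in a contiguous rank band immediately after $u$.

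First, I would show that performing a single vertex move $(v' \to j')$ with $v' \neq u$ changes $\tx{TestMove}(\cdot, V, W, u, j)$ by at most $O(1)$. The set $S(\pi)$ can change in only two ways under such a move: $v'$ itself may enter or leave the band, and if $u$'s rank shifts by one (which occurs iff $u$ lies strictly between the old and new positions of $v'$), then the single boundary vertex at the new band endpoint may also enter or leave $S$. A short case analysis on the relative positions of $u$, $v'$, and $j$ then yields $|S(\pi) \triangle S(\pi')| \leq 2$; since each element of this symmetric difference contributes at most $1$ in absolute value to the sum defining TestMove, we obtain $|\tx{TestMove}(\pi', V, W, u, j) - \tx{TestMove}(\pi, V, W, u, j)| \leq 2$. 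This is essentially the same $O(1)$-per-move observation that powers Lemma~\ref{lemma:testmoveparallel}.

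Second, I would iterate this bound over any set $M$ of $k$ single vertex moves, executed either sequentially or in parallel (the parallel case is handled exactly as in Lemma~\ref{lemma:testmoveparallel}), to conclude that the total change in $\tx{TestMove}(\cdot, V, W, u, j)$ across the batch is at most $O(k)$. To close the argument: a ``newly created'' move $(u \to j)$ is, by definition, one that was not already cost-improving in the sense of {\sf GetMoves} before the batch, so its TestMove value was at most $\tfrac{1}{2}\ve d/\log n$ (allowing for the $\tfrac{1}{8}\ve d/\log n$ sampling slack of Lemma~\ref{lemma:testmoveapproxerror}), whereas afterwards $\tx{TestMove}(\pi_M, V, W, u, j) > \tfrac{7}{8}\ve d/\log n$. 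Since the gap is $\Omega(\ve d/\log n)$ and each move contributes only $O(1)$, we are forced to have $k = \Omega(\ve d/\log n)$.

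The main obstacle is the case analysis in the first step, specifically when the move $(v' \to j')$ straddles $u$'s position: this is exactly when $\rho_\pi(u)$ shifts, so the new band $[\rho_{\pi'}(u)+1, j]$ does not coincide with $[\rho_\pi(u)+1, j]$, and one must simultaneously track the ``lost'' vertex at the old endpoint and the ``gained'' vertex at the new endpoint. Once $|S(\pi) \triangle S(\pi')| \leq 2$ is verified in all relative-order cases of $(u, v', j', j)$, the remainder of the proof is routine accounting.
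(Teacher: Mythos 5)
Your proposal is correct and takes essentially the same approach as the paper: both arguments hinge on the observation that a single vertex move on a vertex other than $u$ perturbs $\tx{TestMove}(\cdot,V,W,u,j)$ by only $O(1)$ (since the contiguous rank band $[\rho_\pi(u)+1,j]$ gains and loses at most two vertices), so moving from the sub-threshold pre-batch value to above $\tfrac{7}{8}\ve d/\log n$ forces $\Omega(\ve d/\log n)$ moves. One small bookkeeping point: not being returned by {\sf GetMoves} with $c=\tfrac12$ together with the $\tfrac18\ve d/\log n$ slack of Lemma~\ref{lemma:testmoveapproxerror} gives a pre-batch bound of $\tx{TestMove}\leq \tfrac58\ve d/\log n$ (not $\tfrac12$), leaving a gap of $\tfrac14\ve d/\log n$, which still yields the claimed $\Omega(\ve d/\log n)$.
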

\begin{proof}

Any single vertex move $(u \to j)$ such that TestMove$(\pi, V, W, u, j) > \frac58 \ve d / \log n$ will be included in $M_1$ and eventually be made unless its cost improvement goes below $\frac58 \ve d / \log n$. Each single vertex move changes the cost of another move by $O(1)$. Therefore in order to create a new single vertex move $u \to j$ such that TestMove$(\pi, V, W, u, j) > \frac78 \ve d / \log n$, we will have to make $\Omega(\ve d / \log n)$ moves.
\end{proof}

These demonstrate that every move made contributes to a substantial cost improvement, and yet several moves need to be made in order to create a new one of significant cost improvement. Therefore, the total cost improvement that is observed before a new single vertex move of significant cost improvement is created will be quite high, as is demonstrated below.


\begin{lemma}
\label{lemma:enoughsvms-fascost}
The total cost of the feedback arc set needs to be decreased by $\Omega(\ve^4 N^2 / \log^4 n)$ to create new strong moves $u \to j$ for which TestMove$(\pi, V, W, u, j) > \frac78 \ve d / \log n$.
\end{lemma}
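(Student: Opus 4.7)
The plan is to combine Lemmas~\ref{lemma:realizedimprovement} and \ref{lemma:enoughsvms} in a straightforward product, using the length cutoff $B$ of {\sf ApproxLocalImprove} to turn both bounds (which depend on the per-move length $d$) into absolute quantities in $N$.

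First, I would recall that {\sf ApproxLocalImprove} only ever considers single vertex moves whose length $d = |j - \rho_\pi(u)|$ satisfies $\lceil \log d \rceil \in [B,L]$, where $B = \lceil \log(\Theta(\ve N/\log n))\rceil$. Hence every candidate move—both any already-executed move and the hypothetical newly-created move $(u \to j)$ under consideration—has length at least $d \geq 2^{B-1} = \Theta(\ve N / \log n)$.

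Next I would invoke Lemma~\ref{lemma:enoughsvms} to lower bound the \emph{number} of moves that must be made before a new move $(u \to j)$ with $\text{TestMove}(\pi, V, W, u, j) > \tfrac{7}{8}\ve d/\log n$ can appear. The lemma gives $\Omega(\ve d/\log n)$ moves; plugging in $d = \Omega(\ve N/\log n)$ yields $\Omega(\ve^2 N/\log^2 n)$ moves.

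Then I would invoke Lemma~\ref{lemma:realizedimprovement} to lower bound the \emph{realized improvement per move}. Each move of length $d_m$ decreases the cost by $\Omega(\ve d_m/\log n)$; again using $d_m \geq \Theta(\ve N/\log n)$ this is at least $\Omega(\ve^2 N/\log^2 n)$ per move.

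Finally, summing the per-move improvement over the required number of moves gives a total decrease of at least
\[
\Omega\!\left(\frac{\ve^2 N}{\log^2 n}\right) \cdot \Omega\!\left(\frac{\ve^2 N}{\log^2 n}\right) \;=\; \Omega\!\left(\frac{\ve^4 N^2}{\log^4 n}\right),
\]
which is the stated bound. The only subtlety is ensuring both factors can be multiplied honestly: the per-move improvement bound from Lemma~\ref{lemma:realizedimprovement} is a uniform lower bound (holding for every move in the sequence, since each satisfies $d_m \geq \Theta(\ve N/\log n)$), so summing it over $\Omega(\ve^2 N/\log^2 n)$ moves is legitimate. There is no real obstacle here beyond checking that the length cutoff $B$ is applied consistently to both the executed moves and the newly-created one; the bulk of the work has already been done in the two preceding lemmas.
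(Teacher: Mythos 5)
Your proof is correct and follows the same route the paper intends: multiply the $\Omega(\ve^2 N/\log^2 n)$ per-move realized improvement (Lemma~\ref{lemma:realizedimprovement} combined with $d\geq 2^{B-1}=\Theta(\ve N/\log n)$) by the $\Omega(\ve^2 N/\log^2 n)$ count of required moves (Lemma~\ref{lemma:enoughsvms} with the same length cutoff). Your observation that the length lower bound $B$ applies uniformly to every executed move and to the hypothetical new move is exactly the point that makes the product legitimate.
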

\begin{proof}

By Lemma \ref{lemma:enoughsvms} and the fact that $d > \Theta(\ve N / \log n)$, we need to make $\Omega(\ve^2 N / \log^2 n)$ many single vertex moves to create such a new move. Since each single vertex move we make in the algorithm is such that its realized cost improvement is $\Omega(\ve d / \log n)$ (by Lemma \ref{lemma:realizedimprovement}), and that $d > \Theta(\ve N / \log n)$, the cost improvement due to each move is $\Omega(\ve^2 N / \log^2 n)$. The lemma follows.
\end{proof}

Thus, the cost of the feedback arc set is reduced heavily before new strong single vertex moves are created. However, since the feedback arc set cost induced by any permutation is $O(N^2)$, the algorithm does not need to check if such a new move has been created too many times, which leads us to the next two lemmas.


\begin{lemma}
\label{lemma:line15approxlocalimprove}
The loop on line 15 in {\sf ApproxLocalImprove} has $O(\ve^{-4}\log^4 n)$ iterations.
\end{lemma}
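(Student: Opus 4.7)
The plan is to combine a global potential argument with Lemma \ref{lemma:enoughsvms-fascost}. Since the feedback arc set cost induced by any permutation on $V$ is at most $\binom{N}{2} = O(N^2)$, the total decrease in cost over the entire execution of {\sf ApproxLocalImprove} is bounded by $O(N^2)$. If I can argue that each outer iteration of the while loop beyond the first is accompanied by a cost decrease of $\Omega(\ve^4 N^2 / \log^4 n)$, then the number of iterations is at most
\[
\frac{O(N^2)}{\Omega(\ve^4 N^2 / \log^4 n)} + 1 \;=\; O(\ve^{-4} \log^4 n),
\]
which is the desired claim.

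To establish the per-iteration decrease, I would reason about the re-entry condition of the outer loop. The loop continues into a new iteration precisely when {\sf GetMoves}$(V, W, \ve, n, \pi, \{E_{v,i}\}, 1)$ returns a non-empty set $S$ relative to the updated permutation $\pi$ produced by the preceding inner for loop. By Lemma \ref{lemma:testmoveapproxerror}, any move $(u \to j) \in S$ satisfies the true bound $\text{TestMove}(\pi, V, W, u, j) > \tfrac{7}{8}\ve d / \log n$, which is exactly the threshold appearing in the hypothesis of Lemma \ref{lemma:enoughsvms-fascost}. Invoking that lemma, the preceding outer iteration must have decreased the feedback arc set cost by $\Omega(\ve^4 N^2 / \log^4 n)$.

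The main subtlety I expect to have to address is that a move present in $S$ at the start of outer iteration $k+1$ could, in principle, have already belonged to $M_1$ at the start of iteration $k$ and simply escaped the $\Theta(\ve^2 \log^{-2} n)$-rate sampling on line 20 throughout all $\Theta(\ve^{-2} \log^2 n)$ inner steps, so that it is not literally ``newly created.'' I would handle this by appealing to Lemma \ref{lemma:enoughsvms-fascost} in its cost-to-maintain form: either the move's true $\text{TestMove}$ stayed above $\tfrac{7}{8}\ve d/\log n$ across the inner for loop, which by the lemma itself already requires the stated cost drop via the intervening moves on the same region of $\pi$, or it dropped below the threshold and had to be recreated by further moves, again invoking the same accounting. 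Summing the per-iteration cost decreases across $k$ outer iterations against the global $O(N^2)$ budget yields $k = O(\ve^{-4} \log^4 n)$, which is exactly the bound on line 15. The principal obstacle is formalizing this persistence-versus-recreation dichotomy so that Lemma \ref{lemma:enoughsvms-fascost}'s accounting truly covers both cases; but since the lemma's statement is phrased as a cost lower bound for the \emph{existence} of a qualifying move after a sequence of modifications, rather than for its strict creation ex nihilo, this folds cleanly into the argument.
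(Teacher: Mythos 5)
Your high-level strategy---bound the total potential by $O(N^2)$, invoke Lemma~\ref{lemma:enoughsvms-fascost} to get a per-iteration decrease of $\Omega(\ve^4 N^2/\log^4 n)$, and divide---is the right skeleton and matches the paper's approach. You also correctly identify the subtlety that a move $(u\to j)$ present in $S$ at the start of outer iteration $k+1$ need not be \emph{newly created}: it may have been in $M_1$ throughout iteration $k$ and simply avoided the $\Theta(\ve^2\log^{-2}n)$-rate sampling on every inner step.

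The problem is in your resolution of that subtlety. Your ``cost-to-maintain form'' of Lemma~\ref{lemma:enoughsvms-fascost} is not something the lemma actually provides. Its statement concerns the cost that must be paid to \emph{create} new moves whose $\tx{TestMove}$ crosses the $\frac{7}{8}\ve d/\log n$ threshold, i.e., whose true improvement increases past the threshold as a result of moves made elsewhere. A move whose true $\tx{TestMove}$ simply \emph{stays} above the threshold, because no executed move touches the relevant interval of $\pi$, imposes no cost-decrease requirement at all---nothing need have changed. Your claim that such persistence ``already requires the stated cost drop via the intervening moves on the same region of $\pi$'' is exactly backwards: persistence is cheapest precisely when there are \emph{no} intervening moves. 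So the first branch of your dichotomy does not fold into Lemma~\ref{lemma:enoughsvms-fascost}, and the per-iteration lower bound on the cost decrease is not established for the case where re-entry is triggered by a surviving (rather than freshly created) move. Closing this gap requires a genuinely separate argument---either a high-probability claim that the $\Theta(\ve^{-2}\log^2 n)$ rounds of $\Theta(\ve^2/\log^2 n)$-rate sampling exhaust every move that remains eligible in $M_2$ (so that by the end of the inner loop no pre-existing move survives), or a direct aggregate accounting showing that the moves that \emph{were} executed during the inner loop already produced the required $\Omega(\ve^4 N^2/\log^4 n)$ drop.
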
 
\begin{proof}

The loop terminates when there is no remaining move $u \to j$ such that TestMove$(\pi, V, W, u, j) > \frac78 \ve d / \log n$. Note that even if multiple such moves are created at the end of one iteration, all of them will be made in the next iteration. Since we need to improve the cost by at least $\Omega(\ve^4 N^2 / \log^4 n)$ in order to get such a move by Lemma \ref{lemma:enoughsvms-fascost}, and the maximum cost of the feedback arc set is $\binom{N}{2}$ and the total cost cannot go below $0$, the outer loop can repeat at most $O(\ve^{-4}\log^4 n)$ times.
\end{proof}

\begin{lemma}
\label{lemma:line14approxlocalimprove}
The counter $i$ on line 14 in {\sf ApproxLocalImprove} can increase by $O(\ve^{-6}\log^6 n)$.
\end{lemma}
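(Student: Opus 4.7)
The approach is a direct trace of the nested loop structure inside {\sf ApproxLocalImprove}. The counter $i$ is initialized to $1$ on line 14 and is incremented only by the statement $i \leftarrow i+1$ on line 23. This statement sits inside the for loop on line 17 that ranges over $m \in [1, \Theta(\ve^{-2} \log^2 n)]$, which in turn sits inside the while loop on line 15. Thus, to bound the total number of increments to $i$, it suffices to multiply the number of iterations of these two enclosing loops.

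First, I would invoke Lemma \ref{lemma:line15approxlocalimprove} to conclude that the outer while loop executes at most $O(\ve^{-4} \log^4 n)$ times. Then I would observe that, within each iteration of the while loop, the inner for loop runs exactly $\Theta(\ve^{-2} \log^2 n)$ times, and $i$ advances by $1$ on each pass through the inner body. Multiplying these two bounds yields
$$O(\ve^{-4} \log^4 n) \cdot \Theta(\ve^{-2} \log^2 n) \;=\; O(\ve^{-6} \log^6 n),$$
which is the claimed bound. The point of the lemma, beyond its short proof, is to match the $\Theta(\ve^{-6} \log^6 n)$ sample sets $E_{v,i}$ precomputed per vertex on lines 5--13, thereby guaranteeing that whenever we advance $i$ (because the current samples have become dependent on $\pi$ after the batch update $\pi \leftarrow \pi_{M_3}$), we still have a fresh, permutation-independent sample set available. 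I expect no real obstacle here: all of the substantive counting has already been absorbed into Lemma \ref{lemma:line15approxlocalimprove}, leaving this lemma as a simple product of loop bounds.
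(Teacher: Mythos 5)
Your proof is correct and follows exactly the argument the paper uses: the while loop on line 15 runs $O(\ve^{-4}\log^4 n)$ times by Lemma~\ref{lemma:line15approxlocalimprove}, the inner for loop runs $\Theta(\ve^{-2}\log^2 n)$ times and increments $i$ once per iteration, so the product gives the stated $O(\ve^{-6}\log^6 n)$ bound. Your closing remark correctly identifies the purpose of the lemma, namely to justify the $\Theta(\ve^{-6}\log^6 n)$ sample sets per vertex precomputed on lines 5--13.
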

\begin{proof}

In {\sf ApproxLocalImprove}, the sampling on line 21 is done in such a way that every move is selected exactly once (unless the move fails the check on lines 18-20). Therefore the number of iterations required to exhaust all moves in $M_1$ is $O(\ve^{-2} \log^2 n)$, as described on line 17. Using Lemma \ref{lemma:line15approxlocalimprove}, we conclude that the total number of samples required per vertex is $O(\ve^{-6}\log^6 n)$.
\end{proof}

In this algorithm, each move that is made has positive cost improvement, as shown by Lemma \ref{lemma:realizedimprovement}. Since only moves $(u \to j)$ such that $\text{TestMove}_{E_{u,\alpha}}(\pi, V, W, u, j) > \ve d / \log n$ are executed, the terminating condition of the loop on line 15 in {\sf ApproxLocalImprove} is the same as the terminating condition in {\sf SampleAndRank}'s version of {\sf ApproxLocalImprove}. Now by the results in the query model \cite{ailon2012active}, a $(1+\ve)$ approximation is obtained and the proof of the first part of Theorem \ref{theorem:ptas} is concluded. A more detailed justification is given in the following subsection. 

\subsection{Proof of main theorem}
\label{subsection:main}
\begin{proof}[Proof of Theorem \ref{theorem:ptas}i]

As can be seen in Algorithm \ref{algorithm:snr3}: { \sf ApproxLocalImproveSAR}, the algorithm keeps making local optimizations on $\pi$ as long as there exists a single vertex move $(u \to j)$ that satisfies the following properties (where $l = \lceil \log |j - \rho_{\pi}(u)| \rceil$):
\begin{enumerate}
    \item $l \in [B, L]$
    \item $\tx{TestMove}_{E_{u,l}}(\pi, V, W, u, j) > \ve |j - \rho_{\pi}(u)| / \log n$
\end{enumerate}
Intuitively, at the end of the optimization, there are no `long' moves left which have a significant cost improvement (as estimated by the ensemble). Since the total number of moves that can be made by the algorithm is $O(n^2)$ and there are $O(n^2)$ many sample ensembles $E_{u,l}$, Ailon's algorithm needs the approximation used on line 15 of {\sf ApproxLocalImproveSAR} to be good with probability $1 - O(n^{-4})$. Therefore, Ailon proves the following lemma:

\begin{lemma}
\label{lemma:ailon}
[Ailon \cite{ailon2012active}] Any sample ensemble $\mathcal{S}$ used in {\sf ApproxLocalImproveSAR} is a good approximation with probability $1 - O(n^{-4})$. A good approximation satisfies the following two properties for all $u \in V$ and $j \in [n]$ such that $\log |j - \rho_{\pi}(u)| \geq B$ (let $l = \lceil |j - \rho_{\pi}(u)| \rceil$):
\begin{enumerate}
    \item $|\{x:(u,x) \in E_{u,l}\} \cap \{x : \rho_{\pi}(x) \in [\rho_{\pi}(u), j]\}| = \Omega(\ve^{-2} \log^2 n)$
    \item $|\tx{TestMove}_{E_{u,l}}(\pi, V, W, u, j) - \tx{TestMove}(\pi, V, W, u, j)| \leq \frac12 \ve |j - \rho_{\pi}(u)| / \log n$
\end{enumerate}
\end{lemma}

Using the second property above, line 15 of { \sf ApproxLocalImproveSAR}, and the triangle inequality, it can be seen that when the while loop on line 15 terminates, all `long' moves $(u \to j)$ left have a cost improvement of $O(\ve|\rho_{\pi}(u) - j|/\log n)$ where $u \in V$ and $j \in [n]$. Here, `long' refers to $l \in [B, L]$ where $l = \lceil |j - \rho_{\pi}(u)| \rceil$. In other words,
\begin{equation}
\label{eq:one}
    \forall u \in V \tx{ and } j \in [n] \tx{ s.t. } \lceil |j - \rho_{\pi}(u)| \rceil \in [B,L]: \tx{TestMove}(\pi, V, W, u, j) = O(\ve|\rho_{\pi}(u) - j|/\log n)
\end{equation}

Now consider any recursive call of { \sf ApproxLocalImproveSAR}, and let $\pi_V^1$ be the output of the call when restricted to the vertex set $V$. Similarly, let $\pi_V^*$ be the optimal permutation when the problem is restricted to only $V$ (using the weight submatrix $W_V$). Since $(u \to \pi_V^*(u))$ is a possible move that could be made for any $u \in V$, the result in the previous paragraph can be therefore weakened to 
\begin{equation}
\label{eq:two}
\tx{TestMove}(\pi_V^1, V, W_V, u, \pi_V^*(u)) = O(\ve|\rho_{\pi^1_V}(u) - \rho_{\pi^*_V}(u)|/\log n)
\end{equation}
where $u \in V$ is such that $|\rho_{\pi^1_V}(u) - \rho_{\pi^*_V}(u)| = \Omega(\ve N / \log n)$.\\

The above is precisely the post-condition that is used in {\sf ApproxLocalImproveSAR} to complete the proof of Lemma \ref{lemma:beta} mentioned in the main text. Since { \sf ApproxLocalImprove} also satisfies Equation \ref{eq:one} on termination due to Lemma \ref{lemma:realizedimprovement}, it also satisfies Equation \ref{eq:two}. Therefore Lemma \ref{lemma:beta} can be used for {\sf GetNearOptimalPermutation}. Let $C_X'$ be the cost of the at a leaf $X$ before it is optimized with {\sf AddApproxMFAS}, so we get 
\begin{align*}
    E[C(\pi^O, V, W)] & = E \lb \sum_{X \in \mathcal{I}} \beta_X \rb + E \lb \sum_{X \in \mathcal{L}} \alpha_X \rb\\
    & \leq (1+O(\ve))C^* - E \lb \sum_{X \in \mathcal{L}} C_X^* \rb + E \lb \sum_{X \in \mathcal{L}} \alpha_X \rb & [\tx{Lemma \ref{lemma:beta}}]\\
    & \leq (1+O(\ve))C^* + \sum_{X \in \mathcal{L}} \ve^3|V_X|^2 & [\tx{{\sf AddApproxMFAS}}]\\
    & \leq (1+O(\ve))C^* + \sum_{X \in \mathcal{L}} O(\ve C_X') & [\tx{Line 8, Algorithm \ref{algorithm:two}: {\sf Recurse}}]\\
    & \leq (1+O(\ve))C^*
\end{align*}
as desired. Finally, it is easy to see that the recursive part of the algorithm (including the single vertex moves), all can be executed in $O \lp \tx{poly}(n, \ve^{-1})\rp$ time. However, since we call {\sf AddApproxMFAS} with an error parameter of $\ve^3$, the base cases require $O(n^2 + 2^{\tx{poly}(1/\ve)} \log n)$ time (using $\beta = \ve^3$ and $\eta = 1/n^4$ in Fact \ref{fact:addapprox}). Since there are at most $n$ different base cases, the total time required is $O \left( \text{poly}(n) 2^{\text{poly}(1/\varepsilon)} \right)$ as desired.
\end{proof}

The next sections focus on the number of passes and space required in order to prove the second and third parts of Theorem \ref{theorem:ptas}.

\subsection{Number of Passes And Space}

\begin{lemma}
\label{lemma:onepassapproxlocalimprove}
A call to {\sf ApproxLocalImprove} requires $1$ pass and uses $O(N \poly(\log n / \ve))$ space.
\end{lemma}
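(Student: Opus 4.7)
The plan is to split a single invocation of {\sf ApproxLocalImprove} into two phases: a sampling phase (lines 5--13) that touches the stream exactly once, and an optimization phase (lines 14--25) that runs entirely in memory. The crucial observation that makes one pass suffice is that the random samples in the first phase depend only on the input permutation $\pi$, not on any edge weights observed from the stream. If the base case $N = O(\ve^{-3} \log^3 n)$ is hit on line 2, no stream access is needed at all, so I would focus on the general case.

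First I would argue that all sample requests can be pre-specified before seeing any edges. For each $v \in V$, each $i \in [1, \Theta(\ve^{-6}\log^6 n)]$, and each $m \in [1, \Theta(\ve^{-5}\log^5 n)]$, line 8 draws $j$ uniformly from $[1,N]$ and line 9 records the pair $(v, \pi(j))$. Since $\pi$ is a fixed input to {\sf ApproxLocalImprove}, the unordered pair $\{v, \pi(j)\}$ to be sampled is determined before the pass begins. I would pre-generate all such requested pairs (at most $N \cdot \Theta(\ve^{-6}\log^6 n) \cdot \Theta(\ve^{-5}\log^5 n) = O(N \poly(\log n / \ve))$ of them) and store them in a hash table keyed by unordered pairs, with each entry pointing to the sample sets $E_{v,i}$ the pair belongs to.

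Next, I would use a single pass to fill in weights: for each edge $(a,b)$ in the stream, look up $\{a,b\}$ in the hash table, and if it is requested, record its direction (which determines both $W(a,b)$ and $W(b,a)$ since the graph is a tournament) into all relevant sample sets. After the pass, lines 14--25 need no further stream access: by Definition \ref{definition:testmoveapprox}, every invocation of $\text{TestMove}_{E_{u,i}}$ is a sum over the stored sample $E_{u,i}$ using ranks under the current $\pi$, and $\pi$ is maintained in memory and updated in place by $\pi \leftarrow \pi_{M_3}$. The temporary sets $M_1, M_2, M_3$ each contain at most $N$ moves, using $O(N \log n)$ bits, and the counter $i$ is $O(\log n)$ bits.

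Totalling the space: $O(N \log N)$ for $\pi$, $O(N \poly(\log n/\ve))$ for the sample sets and their hash table, and $O(N \log n)$ for the working move sets, gives $O(N \poly(\log n/\ve))$ overall, with one pass. The main obstacle is really the conceptual point that the randomness in the sampling phase is independent of the stream; once that is articulated, the rest is bookkeeping. I would conclude by noting that this single-pass guarantee is exactly what enables the $\log n$-pass version of the full algorithm, since each node of the recursion tree can pay for one pass of {\sf ApproxLocalImprove}.
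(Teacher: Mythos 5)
Your argument is correct and matches the paper's reasoning: the one-pass property rests entirely on the fact that the sampled pairs $\{v,\pi(j)\}$ in lines 5--13 depend only on the input permutation $\pi$ and on independent coin flips, not on any data from the stream, so they can all be pre-declared before the pass and then filled with directions in a single sweep of the edges; after that, lines 14--25 operate purely on the in-memory samples and the in-memory $\pi$. The space accounting is also as the paper intends, with the dominant term being the $N \cdot \Theta(\ve^{-6}\log^6 n) \cdot \Theta(\ve^{-5}\log^5 n) = O(N \poly(\log n/\ve))$ sampled pairs (each $O(\log n)$ bits), plus $O(N\log n)$ for $\pi$ and the move sets.

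One minor bookkeeping point worth making explicit, which you gloss over: as written, {\sf GetMoves} ranges over all $u \in V$ \emph{and} all $j \in [n]$, so $M_1$ and $M_2$ could nominally contain more than one candidate move per vertex. To stay within $O(N\log n)$ space you should note that it suffices to retain at most one (or $O(\log n)$, one per dyadic length class) candidate move per vertex, since $M_3$ in any case selects at most one move per $u$. This is a purely implementational detail and does not affect the correctness or the asymptotic bound, but stating it would tighten the claim that $|M_1|,|M_2|,|M_3| = O(N)$.
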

\begin{proof}

In {\sf ApproxLocalImprove}, the creation of samples $E_{v,\alpha}$ on lines 5-13 can be done in 1 pass. Since there are $N$ possible values of $v$ , $O(\tx{poly}(\log n / \ve))$ possible values of $\alpha$, and $O(\tx{poly}(\log n / \ve))$ samples in each set, we use a total of $O(N \poly(\log n / \ve))$ space in order to store all the samples. 

Furthermore, the optimization on lines 15-28 requires no additional passes since we already have all our samples, and they do not need to be changed. The only thing that needs to be shown now is that the set $M_1$ uses at most $O^*(N)$ space. For $M_1$, it is possible that there is some vertex $u$ such that it has $\Theta(n)$ many moves $u \to j$ such that TestMove$_{E_{v,\alpha}}(\pi, V, W, u, j) > \frac12 \ve d / \log n$. However, that is not a problem since $\Theta(\ve N / \log n) < d$, which gives TestMove$_{E_{v,\alpha}}(\pi, V, W, u, j) > \Theta(\ve^2 N / \log^2) n)$. Since the maximum cost improvement we can get due to moving one single vertex around in a fixed permutation is $\Theta(N)$, the vertex will move at most $O(\log^2 n / \ve^2)$ times. Therefore instead of storing all of $M_1$ in one pass, we only need to store at most one move for each $u$ for every iteration of the loop on line 17. Therefore we need only $O(N \log n)$ space to store $M_1$.
\end{proof}

{\sf Recurse} induces a tree where each node corresponds to a call to {\sf ApproxLocalImprove}. This tree has $O(\log n)$ depth with high probability and two nodes at the same depth can be executed in the same pass. In addition, the total space used by nodes on the same layer is $\sum_{i \in L} O(N_i\poly(\log n / \ve)) = O(n\poly(\log n / \ve))$ by Lemma \ref{lemma:onepassapproxlocalimprove} as desired. Finally, note that the algorithm mentioned on line 2 in {\sf GetNearOptimalPermutation} can be executed in $1$ pass and $O(n \log n)$ space, concluding the justification of the second part of Theorem \ref{theorem:ptas}. 

The pass-space trade-off described by Theorem \ref{theorem:ptas}iii therefore remains to be shown. Given $O^*(n^{1 + 1/p})$ space, multiple passes of the original algorithm can be emulated in one pass. Partition the layers of the recursion tree into $p$ levels, where level $i$ contains nodes of size $N$ such that
$$\Theta(n^{1-\frac{i}{p}}) \leq N \leq \Theta(n^{1-\frac{i-1}{p}}).$$
The key idea is that samples $E_{v,\alpha}$ for all $v$ in nodes situated in the same level can be computed simultaneously. Once reliable samples that satisfy Lemma \ref{lemma:elementsinmove} are obtained, the proof from Lemma \ref{lemma:testmoveapproxerror} onwards follows, and {\sf ApproxLocalImprove} can be called for all nodes in the same pass. 

Note that condensing the layers of the tree into $p$ levels allows for the algorithm to use $p$ passes, so all that needs to be shown is that a sample size of $O(n^{1/p} \poly(\log n / \ve))$ per vertex is sufficient to prove Lemma \ref{lemma:elementsinmove}. Proceeding in a similar fashion to the proof of Lemma \ref{lemma:elementsinmove}, consider an arbitrary node $X_1$ in level $\eta$. Then 
$$\Theta(n^{1-\frac{\eta}{p}}) \leq |X_1| \leq \Theta(n^{1-\frac{\eta-1}{p}})$$
Now consider the highest ancestor $X_2$ of $X_1$ such that $|X_2| \leq \Theta(n^{1-\frac{\eta-1}{p}})$. This is the node for which the samples $E_{u,\alpha}$, for every $u \in X_2$ to be used for the nodes in the level $\eta$, are obtained.

Let $(u\to j)$ be the single vertex move desired in $X_1$, and $Y = |\tx{GetSample}(\pi, E_{u,\alpha}, u, j)|$. Let $Y_i$ be an indicator for which $Y_i = 1$ if and only if the $i$-th sample $(u, v_i) \in E_{u,\alpha}$ is such that $v_i$ is in the range of the single vertex move $(u \to j)$. Therefore $Y = \sum Y_i$, and 
$$\ex[Y] = \Theta \lp |X_2|^{-1} \ve^{-5} d n^{1/p} \log^5 n \rp \geq \lp \frac{|X_1|}{|X_2|} \ve^{-4} n^{1/p} \log^4 n \rp \geq \Theta \lp \ve^{-4} \log^4 n \rp$$
Using a Chernoff bound for binomial random variables yields $$\pr(Y \leq \Theta(\ve^{-4}\log^4 n)) \leq e^{-\Theta(\ve^{-4}\log^4 n)} \leq \Theta(n^{-6})$$
as desired. Note that since all elements were picked uniformly at random with repetition in $E_{u,\alpha}$, they are still uniformly random in the required range.

\subsection{Space-Time Trade-off for 1-pass Algorithms}\label{ssec:onepass}
The idea used earlier to achieve the pass-space trade-off for polynomial time algorithms was to emulate multiple passes of the algorithm in one pass by increasing the sampling rate per vertex. However, it can also be used to achieve a space-time trade-off for 1 pass algorithms: we can use our samples to emulate as many passes as possible in one pass. Once the node size becomes too small to approximate with the samples, we can solve each block by using an exponential time $(1+\ve)$ approximation algorithm similar to the one by Chakrabarti et al. \cite{chakrabarti_ghosh_mcgregor_vorotnikova_2019}. 

\begin{fact}
\label{fact:kane}
There is a single pass algorithm, such that 
given a stream of $\textrm{poly}(n)$ integer updates in the range $[-\textrm{poly}(n), \textrm{poly}(n)]$ to an underlying vector $\mathbf{x} \in \rs^N$, uses
$O(\ve^{-2} \log \delta^{-1} \log n)$ bits of memory and maintains a sketch $S \cdot x$ for a matrix $S$
of $d = O(\ve^{-2} \log \delta^{-1})$ dimensions. From $S \cdot x$, one can output a 
$(1 \pm \ve)$-approximation to $\|x\|_1$ with probability at least $1-\delta$. \cite{kane2010exact}
\end{fact}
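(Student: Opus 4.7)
The plan is to follow the classical $L_1$-sketching approach via $1$-stable (Cauchy) random variables, which was the technique introduced by Indyk and subsequently refined to optimal bit complexity by the cited work. I would fix $d = O(\ve^{-2}\log \delta^{-1})$ and let $S \in \rs^{d \times N}$ have i.i.d.\ standard Cauchy entries, maintaining the linear sketch $Sx$ in a streaming fashion by adding $\pm S_{\cdot,j}$ to the current sketch whenever coordinate $j$ of $x$ is updated by $\pm 1$.

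Correctness follows from $1$-stability of the Cauchy distribution: for any fixed $x$, each coordinate $(Sx)_i = \sum_j S_{ij}x_j$ is distributed exactly as $\|x\|_1 \cdot C_i$, where $C_i$ is a standard Cauchy random variable. Hence $|(Sx)_i|/\|x\|_1 \sim |C|$, whose median $\kappa$ is an explicit constant ($\kappa = 1$ after the usual normalization). I would estimate $\|x\|_1$ by $\hat L = \tx{median}_i |(Sx)_i| / \kappa$ and bound its accuracy with a Chernoff argument: since the cdf of $|C|$ is Lipschitz near $\kappa$, the indicator $\mathbf{1}\{|C_i|/\kappa \in [1-\ve, 1+\ve]\}$ has expectation $\tfrac12 + \Omega(\ve)$, so averaging $d = \Theta(\ve^{-2}\log\delta^{-1})$ such indicators drives the probability that fewer than half lie in the window below $\delta$ — which in turn forces $\hat L \in (1 \pm \ve)\|x\|_1$.

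To fit everything in $O(\ve^{-2}\log\delta^{-1}\log n)$ bits, two refinements are required. First, each coordinate of $Sx$ must be representable in $O(\log n)$ bits: I would truncate any Cauchy entry exceeding some $\tx{poly}(n)$ threshold, which costs only $\tx{poly}(n)^{-1}$ failure probability per entry by a union bound and perturbs $\|Sx\|_\infty$ negligibly. Second, the $dN$ Cauchy entries of $S$ cannot be stored explicitly and must be regenerated on demand from a short seed; here I would invoke a tailored pseudorandom generator that fools the median-of-$|C_i|$ estimator while using only $O(\log n)$ seed bits per row, thereby avoiding the $\log^2 n$ blow-up one would pay with a generic Nisan-style PRG.

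The main obstacle is the heavy tail of the Cauchy distribution: $\ex[|C|] = \infty$, so ordinary mean-based concentration inequalities cannot be applied directly to the sum $\sum_j S_{ij}x_j$. The median trick circumvents this by reasoning about quantiles of $|(Sx)_i|$ rather than moments, but simultaneously achieving the optimal bit complexity requires careful truncation of the heavy tail \emph{and} a PRG whose seed length does not lose the $\log n$ factor; weaving these two ingredients together without blowing up the space is the technical heart of the argument.
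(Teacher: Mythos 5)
This statement is cited in the paper as a known result (Fact~\ref{fact:kane}, from Kane--Nelson--Woodruff) and is not proved there, so there is no in-paper argument to compare against; what you have written is an attempted reconstruction of the cited work.

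Your reconstruction captures the right high-level framework --- a linear sketch $Sx$ with i.i.d.\ $1$-stable (Cauchy) entries, exact distributional identity $(Sx)_i \sim \|x\|_1 \cdot C_i$, a quantile-based estimator analyzed via indicator concentration, and truncation of the heavy Cauchy tail to control bit-length. But the final step is where the argument has a genuine gap: you lean on an unspecified ``tailored pseudorandom generator that fools the median-of-$|C_i|$ estimator with $O(\log n)$ seed,'' and that object is precisely what is \emph{not} known to exist. Indyk's original derandomization uses Nisan's PRG and pays $O(\ve^{-2}\log^2 n)$ bits, and nobody has exhibited a PRG fooling the median estimator at the claimed seed length. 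The cited result achieves $O(\ve^{-2}\log\delta^{-1}\log n)$ bits by sidestepping the median estimator entirely: it uses a $p$-stable sketch matrix with only \emph{bounded (limited) independence} across entries, and replaces the median with a smoother estimator (the empirical characteristic function / ``log-cosine'' estimator building on Li's geometric-mean estimator), chosen exactly because it is a bounded Lipschitz functional whose concentration can be certified under $O(1)$-wise independence via moment arguments. So while your plan would reprove Indyk's $O(\ve^{-2}\log^2 n)$ bound, it does not yield the stated $O(\ve^{-2}\log\delta^{-1}\log n)$ bound as written; to close the gap you would need to either construct the tailored PRG you posit (open) or switch to the bounded-independence-plus-smooth-estimator route of the cited paper.
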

%
\begin{fact}
\label{fact:chakra}
There is a single pass algorithm for the Minimum Feedback Arc Set problem on tournament graphs that uses $O(\ve^{-2}n \log^2 n)$ space and returns a $(1+\ve)-$approximation with probability at least $\frac23$. \cite{chakrabarti_ghosh_mcgregor_vorotnikova_2019}
\end{fact}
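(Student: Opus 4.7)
The plan is to exploit the fact that the cost function $C(\pi, V, W)$ can be written as an $\ell_1$ distance between the tournament's edge-indicator vector and a deterministic vector derived from $\pi$, which brings the problem into the scope of the $\ell_1$ linear sketches provided by Fact \ref{fact:kane}.

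First, I would set up the reduction. Encode the tournament as a vector $x \in \{0,1\}^{n(n-1)}$ indexed by ordered pairs of distinct vertices, with $x_{(u,v)} = 1$ iff $(u,v) \in E$. For each permutation $\pi$, define $z_\pi \in \{0,1\}^{n(n-1)}$ by $z_\pi(u,v) = 1$ iff $\rho_\pi(u) < \rho_\pi(v)$. For every unordered pair $\{u,v\}$, the tournament edge contributes to the feedback arc set of $\pi$ precisely when $x$ disagrees with $z_\pi$ at both ordered pairs $(u,v)$ and $(v,u)$, so $\|x - z_\pi\|_1 = 2\,C(\pi, V, W)$.

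Next, I would apply Fact \ref{fact:kane} to maintain a linear sketch $Sx$ during the stream, with failure probability $\delta = 1/(3 \cdot n!)$. Each incoming arc gives a single $+1$ update to one coordinate of $x$, so the sketch fits in $O(\ve^{-2}\log \delta^{-1} \log n) = O(\ve^{-2}n\log^2 n)$ bits once we substitute $\log \delta^{-1} = O(n\log n)$. Linearity gives $S(x - z_\pi) = Sx - Sz_\pi$ for every $\pi$, and since $S$ is oblivious, we can regenerate its entries from a small seed to compute $Sz_\pi$ offline.

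Finally, in post-processing we enumerate all $n!$ permutations (allowed by the exponential-time budget), recover a $(1\pm\ve)$-estimate of $\|x - z_\pi\|_1 = 2\,C(\pi,V,W)$ from $S(x-z_\pi)$ via Fact \ref{fact:kane}, and return the $\pi$ minimizing this estimate. The main obstacle is that the sketch's accuracy must hold \emph{simultaneously} for all $n!$ query vectors $\{x - z_\pi\}_\pi$; since $S$ is chosen obliviously before any $z_\pi$ is revealed, a union bound over the $n!$ queries succeeds, but it forces $\log \delta^{-1} = \Theta(n \log n)$ and drives the extra $\log n$ factor in space beyond Fact \ref{fact:kane}'s base bound. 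Conditioning on this good event, every cost estimate is within a $(1\pm\ve)$ multiplicative factor of its true value, and a standard rescaling of $\ve$ by a constant shows that the output permutation is a true $(1+\ve)$-approximation with probability at least $2/3$.
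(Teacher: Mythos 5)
Fact~\ref{fact:chakra} is stated in the paper as an external citation (the ``bruteforce with sketching'' entry of Table~\ref{tab:fast}), so no in-paper proof is given, but your proposal correctly reconstructs the standard argument behind it. The identity $\|x-z_\pi\|_1 = 2\,C(\pi,V,W)$ checks out for tournaments (each unordered pair with a back-edge contributes $2$, each forward pair contributes $0$), the union bound over $n!$ permutations forces $\log\delta^{-1}=\Theta(n\log n)$ and hence the $O(\ve^{-2}n\log^2 n)$ space, and the $(1\pm\ve)$ multiplicative estimates on every permutation's cost yield a $(1+O(\ve))$-approximation by the usual $\frac{1+\ve}{1-\ve}$ rescaling. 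This is essentially the same sketch-then-enumerate route as the cited source and as the paper's own Theorem~\ref{theorem:2minusgamma} (which generalizes it); the only implicit ingredient you lean on without justification is that Kane et al.'s sketch is generated from a short seed so that $Sz_\pi$ can be recomputed offline within the space budget, which is indeed a property of that construction but worth stating explicitly.
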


Using the above two facts, we give the following trade-off: 
\begin{theorem}
\label{theorem:2minusgamma}
There exists an algorithm which can solve the Minimum Feedback Arc Set problem on tournament graphs up to a $(1+\ve)-$approximation in a single pass with $O^*(n^{2-\gamma})$ space and $O \lp 2^{\Theta(n^\gamma \log n)} 2^{\text{poly}(1/\ve)} \rp$ time, for any $0 < \gamma < 1$.
\end{theorem}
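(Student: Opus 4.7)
The plan is to collapse the pass--space tradeoff of Theorem~\ref{theorem:ptas}iii into a single pass by inflating the sample sets, and then to abort the recursion at nodes of size $\Theta(n^\gamma)$; each remaining leaf block is small enough to be solved by brute force using the sketch-based $(1+\ve)$ cost estimator underlying Fact~\ref{fact:chakra}. Intuitively, the $p$-pass algorithm with per-vertex sample size $\Theta(n^{1/p})$ condenses nodes of size $n^{1-\eta/p}$ to $n^{1-(\eta-1)/p}$ into a single pass; taking $p=1/(1-\gamma)$ condenses every level of the recursion tree above size $n^\gamma$ into a single pass.

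Concretely, during the single pass I would maintain: (i) for every vertex $v\in V$ and every $\alpha\in[1,\Theta(\ve^{-6}\log^6 n)]$, a multiset $E_{v,\alpha}$ of $\Theta(n^{1-\gamma}\ve^{-5}\log^5 n)$ uniformly chosen edges incident to $v$, giving $O^*(n^{2-\gamma})$ space in total; (ii) indegree counters for the Coppersmith $5$-approximation in $O(n\log n)$ space; (iii) the sketching machinery of Chakrabarti et al.~(Facts~\ref{fact:kane}--\ref{fact:chakra}) in $O^*(n)$ space, supporting $(1\pm\ve)$-estimation of the cost of any queried permutation on any queried subset $V'\subseteq V$ during post-processing. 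After the pass, I would sort by indegree and then run a modified {\sf Recurse} that on a node of size $N\geq n^\gamma$ behaves exactly as in Algorithm~\ref{algorithm:two} (invoking {\sf ApproxLocalImprove} with the prestored samples, or {\sf AddApproxMFAS} on the large-cost base case), and on any node of size at most $n^\gamma$ brute-forces over all $(n^\gamma)!$ permutations and returns the one whose sketch-estimated cost is minimal.

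For correctness of the sample-based portion, I would replay the $\ex[Y]$ computation from the end of the proof of Theorem~\ref{theorem:ptas}iii: with $|X_2|\leq n$, per-vertex samples of size $\Theta(n^{1-\gamma}\ve^{-5}\log^5 n)$, and the minimum move length $d\geq B=\Theta(\ve n^\gamma/\log n)$ enforced by {\sf GetMoves}, one obtains $\ex[Y]\geq\Theta(\ve^{-4}\log^4 n)$, so Lemma~\ref{lemma:elementsinmove} holds and the entire chain of Lemmas~\ref{lemma:testmoveapproxerror}--\ref{lemma:line14approxlocalimprove} carries through unchanged. For correctness of the brute-force leaves, minimizing an $(1\pm\ve)$-approximate cost yields a $(1+O(\ve))$-approximate permutation on each block by the standard argument. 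Because the terminating condition of the modified {\sf ApproxLocalImprove} still coincides with the terminating condition of {\sf SampleAndRank}'s local optimization step, Lemma~\ref{lemma:beta} applies verbatim to the internal nodes, and combining its bound on $\sum_{X\in\mathcal{I}}\beta_X$ with the $(1+\ve)$ guarantee on each leaf yields an overall $(1+\ve)$-approximation after rescaling $\ve$ by a constant.

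The space is dominated by the samples at $O^*(n^{2-\gamma})$, with $O^*(n)$ for sketches and indegrees. The time is dominated by the $2^{\text{poly}(1/\ve)}$ factor per {\sf AddApproxMFAS} invocation together with brute-forcing the $O(n^{1-\gamma})$ leaves, each of which incurs $(n^\gamma)!=2^{\Theta(n^\gamma\log n)}$ permutation trials, yielding the claimed $O\lp 2^{\Theta(n^\gamma\log n)}2^{\text{poly}(1/\ve)}\rp$ overall. The main obstacle I anticipate is controlling the failure probability of the sketch across the $2^{\Theta(n^\gamma\log n)}$ cost queries emitted by each brute-force leaf: this forces the per-query failure probability down to $2^{-\Theta(n^\gamma\log n)}$ and thus inflates the sketch size by a $\Theta(n^\gamma\log n)$ factor, so care is needed to verify that the inflated sketch still fits within $O^*(n^{2-\gamma})$ for every $0<\gamma<1$; a union bound over all queries then secures correctness with constant probability.
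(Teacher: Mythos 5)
Your high-level plan — inflate the per-vertex sample sets so that the recursion of Theorem~\ref{theorem:ptas}iii can be replayed in one pass down to nodes of size $\Theta(n^\gamma)$, keep the indegree and {\sf AddApproxMFAS} machinery, and then brute-force the surviving leaves against a sketch-based cost estimator — matches the paper's stated approach for this theorem, and the sample-size and time accounting you give are correct.

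The weak link is step (iii): you assert, without justification, that the sketch of Facts~\ref{fact:kane}--\ref{fact:chakra} supports ``$(1\pm\ve)$-estimation of the cost of any queried permutation \emph{on any queried subset} $V'\subseteq V$.'' That is not what a linear $\ell_1$-sketch of the tournament vector gives you. The Chakrabarti et al.\ primitive maintains $Sx$ for $x\in\{\pm1\}^{\binom{n}{2}}$ and answers queries of the form $\|x - y^\pi\|_1 = 2\,C(\pi,V,W)$ for arbitrary full permutations $\pi$; from $Sx$ alone one cannot recover $S(x\cdot\mathbf{1}_{V'\times V'})$, so one cannot directly estimate the \emph{restricted} cost $C(\pi,V',W)$. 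Using the global estimate as a surrogate does not help either: fixing canonical orderings on all other blocks, the global cost is $K + \mathrm{cost}_{B_i}(\pi_i)$ where $K$ is the crossing-plus-other-blocks cost, and a $(1\pm\ve)$ multiplicative estimate of that quantity has additive error $\Theta(\ve K) = \Theta(\ve\cdot C(\pi^O))$, which is far larger than $\mathrm{cost}_{B_i}$ on a leaf of size $n^\gamma$. So minimizing the global sketch estimate over within-block permutations does not produce a $(1+\ve)$-approximation of $C^*_X$ on the leaf, and Lemma~\ref{lemma:beta} cannot absorb an error that scales with the number of leaves. Making this step rigorous requires a leaf estimator that is accurate \emph{relative to $C^*_X$} — for example per-vertex row sketches combined with the fact that {\sf Recurse} only descends into a leaf when its cost is $O(\ve^2 N^2)$, or uniform edge sampling dense enough that each leaf retains $\Omega(\ve^{-2}n^\gamma\log n)$ of its own pairs — and this is precisely the part your sketch of the argument leaves open.

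Finally, the concern you flag — that union-bounding over $2^{\Theta(n^\gamma\log n)}$ queries ``inflates the sketch size by a $\Theta(n^\gamma\log n)$ factor'' and may threaten the $O^*(n^{2-\gamma})$ budget — is pointing at a real issue but the arithmetic is off for the variant you describe. For a \emph{single} global sketch (the Chakrabarti setup), moving from $\delta = 1/n!$ to $\delta = 2^{-\Theta(n^\gamma\log n)}$ \emph{shrinks} the sketch from $O^*(n)$ to $O^*(n^\gamma)$ bits, so space is not the bottleneck there; but, as argued above, a single global sketch is insufficient. If instead one maintains a sketch per vertex row (which is the natural way to get any subset information at all), the total is $\Theta(n)\cdot O(\ve^{-2}n^\gamma\log^2 n) = O^*(n^{1+\gamma})$, which indeed overshoots $O^*(n^{2-\gamma})$ once $\gamma>\tfrac12$. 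So the concern is well founded, but it arises from needing a restriction-capable sketch, not from the union bound per se, and resolving it is the missing core of the proof.
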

\begin{proof}

We can divide the layers of the recursion tree into 2 parts: Part 1 consists of layers where all nodes $X$ are such that $|X| \geq \Theta(n^{\gamma})$ and Part 2 consists of the remaining layers. Note that by increasing the sampling rate to $\Theta(n^{1-\gamma} \poly(\log n / \ve))$ per vertex at the root node of the recursion tree, we can successfully emulate all the passes for the layers in Part 1. 

We also make another key observation here: let $V_1, ..., V_m$ be the top layer of Part 2, so we can divide the cost into two parts:
$$C(\pi, V, W) = C_{int} + C_{ext}$$
$$C_{int} = \sum_{i \in [m]} \sum_{u, v \in V_i: \rho_{\pi}(u) < \rho_{\pi}(v)} W(u,v)$$
$$C_{ext} = \sum_{i,j \in [m]: i < j} \sum_{(u, v) \in (V_i, V_j)} W(u,v)$$
where $C_{int}$ is the internal cost within the blocks, and $C_{ext}$ is the external cost across different blocks.

No matter what optimization we do in Part 2, $C_{ext}$ will remain the same since the relative ordering of two vertices in two different blocks will not change. Therefore we must have that $C_{ext} \leq (1+\ve)C^*$, otherwise there is no possibility that our original algorithm or Ailon's \cite{ailon2012active} algorithm would achieve a $(1+\ve)$-approximation. 

Therefore all that needs to be shown is that $C_{int} \leq (1+\ve)C^*$. Since all vertex sets $V_1, \ldots, V_m$ are disjoint, we can non-adaptively optimize them independently of each other. This is where Fact \ref{fact:kane} is used. Despite the fact that the $\ell_1-$sketch only needs $O(\ve_0^{-2} \log \delta^{-1} \log n)$ bits of space, where $\ve_0$ is the accuracy of the sketch and $\delta$ is the probability of error, their algorithm needs to brute force over all $\Theta(n!)$ possible permutations to find the best permutation, which means that $\delta = \Theta(1 / n!)$ because of the use of a union bound. However, in our algorithm we only need to union bound over $\Theta((n^{\gamma}!) \cdot n^{1-\gamma})$ permutations since we can separately (and non-adaptively) optimize all the $O(n^{1-\gamma})$ blocks of size $O(n^{\gamma})$ each. Therefore, if $C_i$ is the final cost obtained for block $V_i$, we use only $O(\ve_0^{-2} n^{\gamma})$ space in order to obtain a $C_i \leq C_i^* + \ve_0C^*$.

If we set $\ve_0 = \frac{\ve}{n^{1-\gamma}}$, we obtain the following:
\begin{itemize}
    \item Correctness: $C_{int} = \sum_{i \in [m]} C_i \leq \sum_{i \in [m]} C_i^* + \ve_0C^* \leq C^* + m\ve_0C^* \leq (1+\ve)C^*$, where $C_i^*$ is the best possible cost for block $V_i$.
    \item Space: $O(\ve^{-2} n^{2-2\gamma} n^{\gamma}) = O^*(n^{2-\gamma})$ space as desired. Note that this is the same as the space used by our sampling approach to solve Part 1. 
    \item Time: Our sampling approach uses polynomial time, so the time complexity is dominated by the brute force part of our algorithm. This takes $O((n^{\gamma})!\poly(n) 2^{\text{poly}(1/\ve)})$ time, which is the same as $2^{\Theta(n^{\gamma} \log n)} 2^{\text{poly}(1/\ve)}$, as desired.
\end{itemize}
This concludes the proof of Theorem \ref{theorem:2minusgamma}.
\end{proof}

\section{Other Streaming Problems on Tournament Graphs}\label{sec:other}

An algorithm for two more directed graph problems is presented here.
\begin{enumerate}
    \item \ttt{SCC}: the decision problem of determining whether a given digraph is strongly connected or not. A strongly connected digraph is one where there exists a directed path between any pair $u$ and $v$ of vertices. 
    \item \ttt{SCC-FIND}: the problem of finding the strongly connected components of a digraph. A strongly connected component is a subgraph which is maximally strongly connected.
\end{enumerate}

Our space lower bounds for the \ttt{SCC} problem, which are shown later in Section \ref{sec:lowerbounds}, are prohibitive for general digraphs. To obtain fast streaming algorithms, we must turn to specific types of inputs, in this case motivating the study of tournament graphs. Tournament graphs have been the topic of study in computational social choice theory \cite{brandt2016tournament} and therefore, studying their structure, such as computing the strongly connected components, is an important goal. It turns out that one can solve the \ttt{SCC} and \ttt{SCC-FIND} problems on tournament graphs in the insertion-only streaming model, given multiple passes. The appropriate algorithm consists of two phases:

\begin{enumerate}
    \item Finding a Hamiltonian Path in the graph.
    \item Partitioning the path into segments which are strongly connected components.
\end{enumerate}

Every tournament graph is known to have at least one Hamiltonian path, and Chakrabarti et al.\cite{chakrabarti_ghosh_mcgregor_vorotnikova_2019} use an algorithm called KWIKSORT to find such a Hamiltonian path in $p$ passes that uses $\tilde O(n^{1 + 1/p})$ space with high probability. Instead of emulating quicksort like KWIKSORT does, the subroutine introduced here for discovering a Hamiltonian path is inspired from mergesort, and by doing so it guarantees $\tilde O (n^{1 + 1/p})$ space in $p$ passes.

\begin{lemma}
\label{lemma:hampath}
There exists a deterministic algorithm that can find a Hamiltonian path in a tournament graph using $\tilde O (n^{1 + 1/p})$ space and $p$ passes over the data.
\end{lemma}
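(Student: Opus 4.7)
The plan is to use a mergesort-style recursion on the vertex set. Specifically, given a tournament on $V$ with $|V| = n$, I would partition $V$ arbitrarily into $b = n^{1/p}$ balanced groups, recursively find a Hamiltonian path in each group, and then merge the $b$ resulting paths into a Hamiltonian path on $V$. The recursion has depth $p$, so we process one level per pass of the stream for $p$ total passes. The base case is a group whose intra-group edges fit entirely in memory; there we read them in one pass and compute the Hamiltonian path in memory by any classical technique such as iterative insertion, exploiting the standard fact that every tournament admits a Hamiltonian path.

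For the pairwise merge, I would use the classical two-pointer merge for Hamiltonian paths in tournaments: given $P_L = v_1 \to \cdots \to v_a$ and $P_R = u_1 \to \cdots \to u_b$, maintain pointers $i, j$; at each step consult the edge between $v_i$ and $u_j$; if $v_i \to u_j$, append $v_i$ and advance $i$, else append $u_j$ and advance $j$; append the remaining tail once one side exhausts. A short induction on pointer positions shows the output is a valid Hamiltonian path on $V_L \cup V_R$. To execute this in a single streaming pass despite its adaptive queries, I would reformulate it as insertion: for each $u \in V_R$, define $r(u) = \max\{i : v_i \to u \text{ in } P_L\}$, then interleave each $u$ into $P_L$ at position $r(u) + 1$, breaking ties by $P_R$-order. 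A case analysis on consecutive pairs verifies validity: pairs of the form $(v_i, v_{i+1})$, $(v_i, u)$ with $r(u) = i$, $(u, v_{i+1})$ with $i = r(u)$ maximal, and $(u, u')$ consecutive in $P_R$ all carry the required edge in the tournament. The statistic $r(u)$ can be maintained in $O(\log n)$ bits per vertex as the stream is scanned, updating $r(u) \gets \max(r(u), \text{rank}_{P_L}(v))$ for each inter-half edge $v \to u$ encountered, provided $P_L$ and $P_R$ are known from the previous pass.

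The main obstacle is that at each recursion level we must carry out a $b$-way merge rather than a single pairwise merge, which naively requires $b - 1$ sequential pairwise insertions where each depends on the combined path produced by the previous one. I would address this by storing, during the pass, a compact per-vertex summary of that vertex's edges to every other group at the same recursion level, rich enough to execute the $b - 1$ pairwise insertion merges in memory after the pass completes. Summed over all concurrent sub-problems at one main-recursion level, the total storage fits in $\tilde O(n^{1+1/p})$ space because each level of the recursion tree partitions the vertices into equal-sized groups of total size $n$, and the per-vertex summaries contribute polylogarithmic overhead. This bound holds deterministically thanks to the balanced edge-independent partition of mergesort, which is the decisive advantage over KWIKSORT's random pivoting where the analogous bound only holds with high probability.
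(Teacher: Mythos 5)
Your high-level framework (mergesort recursion with branching factor $b = n^{1/p}$, depth $p$, one level per pass, base case of size $n^{1/p}$ solved in memory) matches the paper's stated strategy, and the space accounting for the base level is right. However, the reformulation of the pairwise merge as ``insert $u$ after position $r(u)$, where $r(u) = \max\{i : v_i \to u\}$, breaking ties by $P_R$-order'' is incorrect. Your case analysis omits the case of two vertices $u, u'$ from $P_R$ that tie at the same position but are \emph{not} consecutive in $P_R$: then they become adjacent in the merged output, yet the tournament need not contain the edge $u \to u'$. Concretely, take $P_L = v_1 \to v_2$ and $P_R = u_1 \to u_2 \to u_3$ with $u_3 \to u_1$, and set the cross edges so that $v_1 \to u_1, u_2, u_3$, while $u_1 \to v_2$, $v_2 \to u_2$, $u_3 \to v_2$. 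Then $r(u_1) = r(u_3) = 1$ and $r(u_2) = 2$, and your rule outputs $v_1, u_1, u_3, v_2, u_2$, which is not a Hamiltonian path since $u_3 \to u_1$. The two-pointer merge that you correctly describe earlier is \emph{not} equivalent to this $\max$-rank insertion; the two-pointer positions are monotone along $P_R$ (the $P_L$-pointer never decreases), and that monotonicity is exactly what your static $r(u)$ statistic loses.

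Even granting a correct pairwise merge, the $b$-way merge step is the heart of the difficulty and is left unresolved. You propose storing, for each vertex $u$ and each sibling group $G_j$, a ``compact per-vertex summary'' of $u$'s edges to $G_j$ and then running the $b-1$ sequential pairwise merges in memory after the pass. But a per-group statistic such as $r_j(u) = \max\{a : (P_j)_a \to u\}$ does not determine the analogous statistic with respect to the running merged path $Q_{k-1}$, because $Q_{k-1}$ interleaves vertices from $P_1, \dots, P_{k-1}$ in a data-dependent order, and a maximum over an interleaving is not recoverable from per-list maxima when the underlying relation is non-transitive. If instead the summary is rich enough to reconstruct the orientations that the sequential merge actually queries, you no longer have polylogarithmic overhead per (vertex, group) pair, and the total storage at the higher recursion levels blows past $\tilde O(n^{1+1/p})$. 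So the claim that ``the per-vertex summaries contribute polylogarithmic overhead'' is asserted, not argued, and it is precisely the step where a new idea is needed; the deterministic balance inherited from mergesort (your closing observation) is the right reason this approach should beat KWIKSORT, but it does not by itself yield a one-pass implementation of a $b$-way tournament-path merge.
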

\begin{proof}
To see how the merging of several paths can be done efficiently, we first look at an example of simply merging two paths.

Suppose the two Hamiltonian paths are $P_1$ and $P_2$: the left half has vertices $u_1 ..., u_n$ and the right half has vertices $v_1 ..., v_m$, in that order of the Hamiltonian path on the two halves. Now the extra information we need is $f(i) = \max_{u_jv_i \in E} j$ for all $i \in [m]$. Note that without loss of generality we can assume that every $v_i$ does have an incoming edge from $P_1$, otherwise we can just take the smallest contiguous segment containing all $v_i$ which only has backward edges, and then put that path behind $u_1$. Now we can create a Hamiltonian path as follows:
\begin{enumerate}
    \item Start at $u_1$. Initialize left counter $l$ and right counter $r$ to 1.
    \item Proceed along the path $P_1$ and increment $l$ for every edge taken.
    \item Stop as soon as we get to $u_j$ such that $f(r) = j$, and then take the edge $u_jv_r$ into the other path. This edge exists by definition of $f$.
    \item Proceed along the path $P_2$ and increment $r$ for every edge taken.
    \item Stop as soon as $f(r) > l$. That means that we have not yet reached $f(r)$ on the left side, so we can just take the edge $v_{r-1}u_{l+1}$ to get back to $P_1$. Note that this edge must exist because $f(r-1) \leq l$.
    \item Repeat steps 2-5 until $l = n$ and $r = m+1$.
\end{enumerate}
Now that we have constructed the Hamiltonian path, we need to generate the values of $f$ for this new instance. This can be done in 1 pass because we can just arbitrarily pick two instances of similar size to merge together, and assign them left or right sides. We can pair up all subproblems of similar size and therefore execute this in $\log_2 n$ passes similar to mergesort. Also note that at any point we are only storing the value of $f$ for $n$ vertices, so we are only using $O(n \log n)$ space.

We can conduct this algorithm in exactly $p$ passes by using more space to combine more than two subproblems in each pass. In the first pass, we partition the vertices into groups of size $n^{1/p}$, store all the edges within each group, and at the end compute a Hamiltonian path within these groups. In the following $p - 1$ passes, we can merge the Hamiltonian paths for $n^{1/p}$ groups at once until we obtain a Hamiltonian path for the entire set of vertices. 

To merge multiple paths at once, consider $P_1, \dots, P_{n^{1/p}}$. We can iteratively merge the $j$-th path if for each $v_i \in P_j$ we store $f_k(i) = \max_{u_l v_i \in E, u_l \in P_k} j$ for $k < j$. Intuitively, we must keep track of the rightmost vertex that has an edge to $v_i$ for each of the previous merged paths. Our process can be extended then to merge these $n^{1/p}$ paths by keeping at most $n^{1/p}$ indices per vertex, leading to $\tilde{O}(n^{1 + 1/p})$ space complexity in each of the latter passes. The same is used in the first pass. 

Thus, we can obtain a Hamiltonian path with guaranteed $\tilde{O}(n^{1 + 1/p})$ space, in comparison to the high probability guarantees of KWIKSORT, and polynomial time processing per pass.
\end{proof}

\begin{theorem}
\label{theorem:scc}
Given a Hamiltonian Path on a tournament graph, the \ttt{SCC} and \ttt{SCC-FIND} problems can be solved with an additional $O(n \log n)$ space and 1 pass.
\end{theorem}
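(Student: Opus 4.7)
The plan rests on a structural observation: for any Hamiltonian path $v_1, v_2, \ldots, v_n$ of a tournament, the strongly connected components are exactly the maximal \emph{contiguous} blocks of this path. Indeed, if $v_i$ and $v_j$ with $i < j$ lie in a common SCC, then for any $k$ with $i < k < j$ the forward segments of the path give directed paths $v_i \to \cdots \to v_k$ and $v_k \to \cdots \to v_j$, and combining these with a path $v_j \to \cdots \to v_i$ inside the SCC shows that $v_k$ too belongs to the same SCC. Consequently, the SCC decomposition is fully determined by the set of \emph{cut points}, where a cut point is an index $i \in [1, n-1]$ such that no edge $(v_j, v_k)$ of the tournament satisfies $j > i$ and $k \le i$: each maximal run of path-vertices between consecutive cut points is a single SCC, and the whole tournament is strongly connected iff no cut point exists.

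Building on this, the streaming algorithm computes the cut points in a single pass. I first assume the input Hamiltonian path has been used to assign every vertex its rank along the path; storing this mapping takes $O(n \log n)$ bits. Maintain an array $M[1..n]$ initialized to zero. On reading an edge $(v_j, v_k)$ from the stream, if $j > k$ update $M[k] \gets \max(M[k], j)$, and otherwise discard the edge. After the single pass, compute the prefix maximum $m(i) = \max_{k \le i} M[k]$ in a left-to-right sweep. The key invariant is that $m(i)$ equals the largest right endpoint of any back edge whose left endpoint lies in $\{v_1, \ldots, v_i\}$; therefore $i$ is a cut point if and only if $m(i) \le i$. For \ttt{SCC-FIND}, walk the path and split at each cut point to emit the components; for \ttt{SCC}, output ``yes'' iff the cut-point set is empty.

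The total space is $O(n \log n)$ bits: $O(n \log n)$ for the array $M$ (each entry in $\{0, \ldots, n\}$), $O(n \log n)$ for the path-index table, and lower-order amounts for counters; exactly one pass over the edge stream is used. Correctness of both subproblems is then immediate once the structural lemma and the cut-point characterization are in hand. The main ingredient requiring careful presentation is the structural lemma itself — that SCCs are contiguous along any Hamiltonian path of a tournament — since the rest reduces to routine streaming bookkeeping and a prefix-max post-processing sweep. I do not foresee a substantial obstacle beyond stating that lemma cleanly and verifying the direction conventions so that ``no back edge crosses position $i$'' is correctly identified with ``the prefix-max $m(i)$ does not exceed $i$.''
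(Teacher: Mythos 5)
Your proposal is correct and takes essentially the same route the paper does: it rests on the structural fact that the SCCs of a tournament are exactly maximal contiguous blocks of any Hamiltonian path, reduces the problem to detecting the ``cut'' positions where no back edge crosses, and realizes this in one pass with $O(n \log n)$ bits via the per-position maximum-reach array and a prefix-max sweep. The only thing worth making explicit in a final write-up is the converse half of the cut-point characterization — that when position $i$ is a cut point, all edges between $\{v_1,\ldots,v_i\}$ and $\{v_{i+1},\ldots,v_n\}$ point forward, so no vertex in the suffix can ever reach the prefix and hence $v_i$ and $v_{i+1}$ lie in distinct SCCs — which you use implicitly but do not state.
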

\begin{proof}
Once we have a Hamiltonian Path $v_1$ \ttt{->} $v_2$ \ttt{->} ... \ttt{->} $v_n$, the \ttt{SCC} problem can be solved as follows:
\begin{enumerate}
    \item For each vertex $v_i$, we compute the minimum $j$ such that $(v_i, v_j) \in E$. This represents the `earliest' vertex that can be reached from $v_i$. This can be done with $O(n \log n)$ space.
    \item Now for each $v_i$, we need to find the `earliest' vertex directly connected to some vertex $v_j$ where $j > i$. This is achieved by taking the suffix minima of the result of the first step, and let $f(i)$ be this suffix for $v_i$.
    \item Finally, we check if there exists a $j$ such that $f(j) = j$ and $j \neq 1$. If there does not that means that the graph is strongly connected since $v_1$ is reachable from $v_n$ by continuously traversing the path needed to go from a current vertex $i$ to $f(i)$. Otherwise there exists some $j$ such that $f(j) = j$ and $j \neq 1$, which means that we cannot reach vertex $v_k$ from $v_j$ where for any $k < j$, and therefore, it is not strongly connected.
\end{enumerate}
Note that the last step can be modified to solve the \ttt{SCC-FIND} problem as well: it is the number of $j$ such that $f(j) = j$, and the membership sets follow by the segmentation of the path by these points where $f(j) = j$.
\end{proof}

Thus, with $\tilde O(n^{1 + 1/p})$ space and $p+1$ passes, one can solve \ttt{SCC} and \ttt{SCC-FIND} for tournament graphs.

\section{Lower Bounds for Directed Graph Problems}
\label{sec:lowerbounds}

\subsection{Single pass lower bounds}

Borradaile et al. \cite{DBLP:journals/corr/BorradaileMM14} demonstrated an $\Omega(m)$ lower bound for the space complexity of detecting digraph strong connectivity for 1-pass algorithms where $m$ is the number of edges in the graph. However, this lower bound does not take into account the number of vertices in the graph and can therefore be improved for sparse graphs. We present tighter lower bounds that do take the number of vertices into account and obtain some interesting results:

\begin{theorem}
\label{theorem:scconepass}
Given a directed graph with $n$ vertices and $m$ edges, any 1-pass streaming algorithm that solves \ttt{SCC} needs at least $\Omega(m \log \frac{n^2}{m})$ space.
\end{theorem}
\begin{proof}
Our reduction from \ttt{INDEX} proceeds as follows. Let Alice's vector $x$ have length $m\log \frac{n^2}{m}$, and Bob has an index in $[m\log \frac{n^2}{m}]$. Alice constructs a bipartite graph through the two partitions $L$ and $R$ such that $|L| = |R| = n$, where every vertex $v_i \in L$ has in-degree zero and out-degree $\frac{m}{n}$ to give the graph $m$ total edges. She also partitions $R$ into $\frac{m}{n}$ blocks $\{B_j\}_{j=1}^{\frac{m}{n}}$ of size $\frac{n^2}{m}$, where the vertices on the left side have exactly one outgoing edge to a vertex in each of these blocks, as demonstrated in Figure \ref{fig:one_pass}a. To determine which vertices in these blocks, we partition $x$ into vectors $y_i$ of length $\frac{m}{n} \log \frac{n^2}{m}$ for each node in $L$, which can be interpreted as $\frac{m}{n}$ vectors, denoted $z_j$, of length $\log \frac{n^2}{m}$ that index into the blocks of size $\frac{n^2}{m}$. Such is shown in Figure \ref{fig:one_pass}b.

\begin{figure*}[ht]
    \centering
    \includegraphics[width=0.17\textwidth]{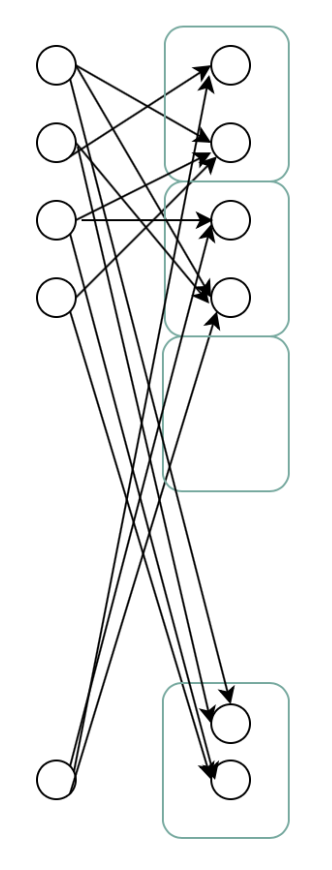}
    \includegraphics[width=0.38\textwidth]{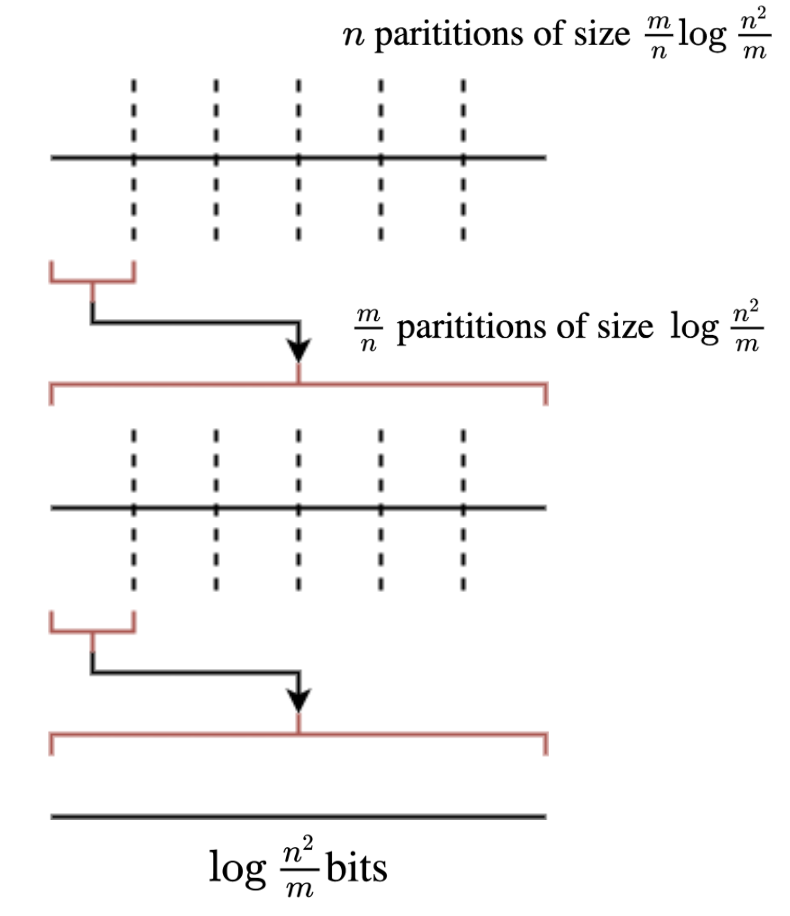}
    \includegraphics[width=0.28\textwidth]{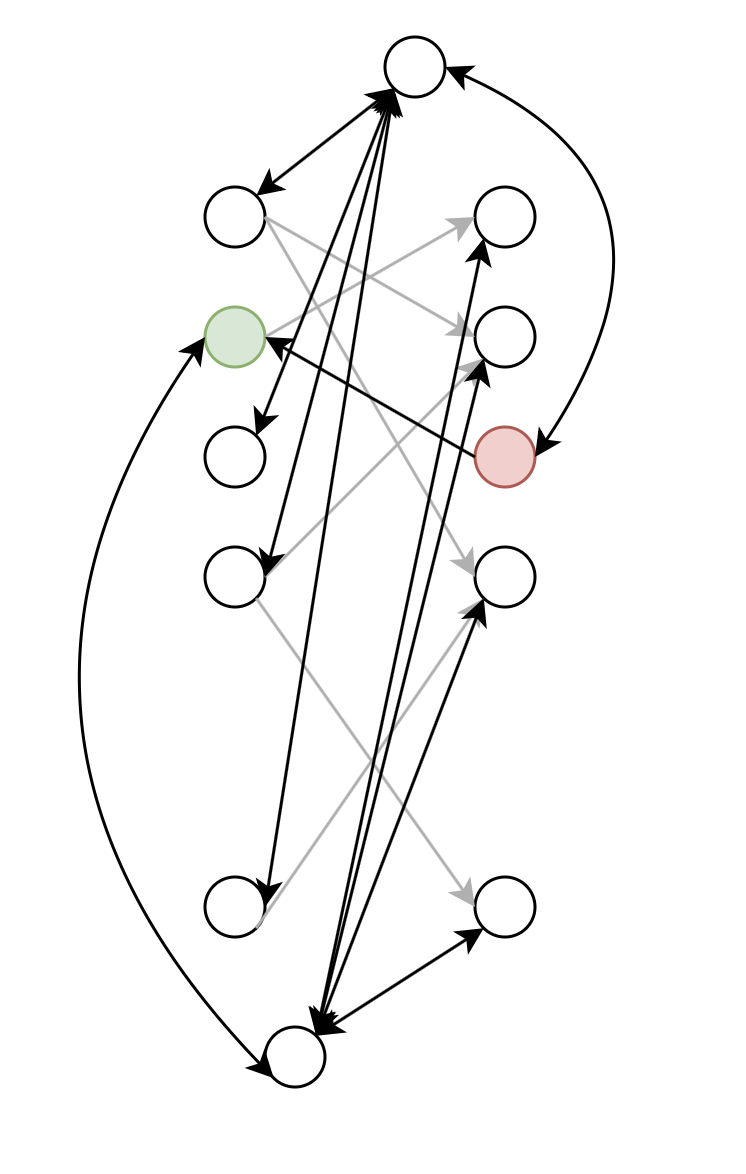}
    \captionsetup{format=hang}
    \caption{One pass reduction diagrams: on the left, Alice creates a bipartite graph by partitioning the right side into $\frac{m}{n}$ blocks and for each vertex on the left having exactly one edge to a member of each block. The middle diagram demonstrates how Alice interprets her input vector by first partitioning it into $n$ vectors each containing edge information for a vertex in $L$. On the right, Bob augments the graph input stream such that its digraph connectivity reveals the bit he is concerned with. }
    \label{fig:one_pass}
\end{figure*}

Alice then uses her bipartite graph as an input stream to the algorithm for \ttt{SCC} and sends the state of it as a message to Bob. We can also represent Bob's index $v \in [m\log \frac{n^2}{m}]$ as a $3$-tuple $(i,j,k)$, meaning Bob must determine the $k$-th bit of the vector $z_j$ that describes the outgoing edge of the $i$-th vertex in the left partition to the block $B_j$ in the right partition. Bob will append his own input stream to Alice's as follows. He first makes every vertex in $L \setminus \{ v_i\}$ strongly connected with the subset $B_j ' \subseteq B_j$ that contains exactly the elements in the block whose written index number contains a one in the $k$-th bit, which is represented by the red node in Figure \ref{fig:one_pass}c. Denote this component $S_1$. Subsequently, the input stream is augmented with edges to make $v_j$ in the left partition, represented by the green node in Figure \ref{fig:one_pass}c, strongly connected with $R \setminus B_j'$ in the right partition. Denote this component $S_2$ and note that the union of $S_1$ and $S_2$ forms the entire vertex set. Dummy nodes are employed since Bob is unaware of Alice's input stream and the pre-existence of some forward edges. In order to avoid a multigraph scenario and obey the insertion-only model's requirement that edges be unique, new nodes are created that also make the graph no longer bipartite. Finally, back edges from $B_j'$ to $v_i$ are inserted.

It suffices to demonstrate that the final graph constructed is strongly connected if and only if the edge from $v_i$ into the block $B_j$ belongs in the subset $B_j'$. Clearly with the last back edges inserted, $S_2$ is reachable from $S_1$. However, if $v_i$ has an out edge to $B_j \setminus B_j'$, then $S_1$ is not reachable from $S_2$ because the only vertex in the left partition that the set $R \setminus B_j'$ can reach is $v_i$, since Alice only appended edges going from left to right and Bob only connected these vertices to $v_i$, On the other hand, if $v_i$ has an out edge to $B_j'$, then there is a path from $S_2$ to $S_1$. Thus, Bob can ascertain the relevant bit in Alice's vector by querying the connectivity of the digraph constructed, and the lower bound on \ttt{\ttt{INDEX}} implies the desired $\Omega(m\log \frac{n^2}{m})$ lower bound on the space complexity of \ttt{SCC}.

\end{proof}

This lower bound is optimal as one can store the entire graph in this amount of memory. \\
 %

\begin{figure}[b]
    \centering
    \includegraphics[width=10cm, height=5cm]{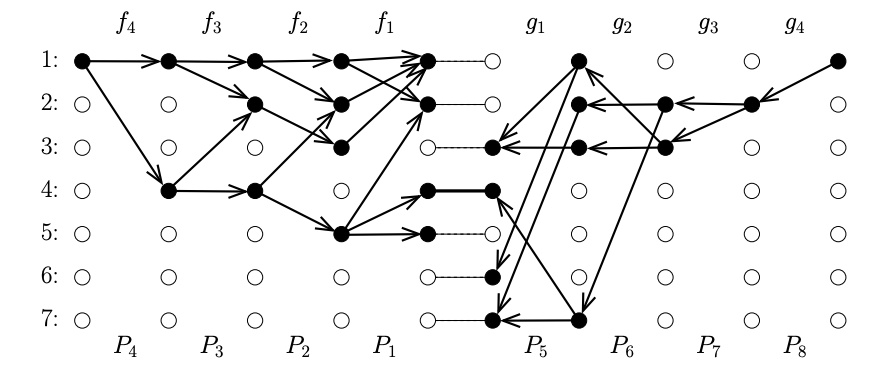}
    \caption{A true instance of \ttt{INTERSECT}(\ttt{SC}$_{7,4}$). Some edges have been omitted for clarity. The players play in order of their numbering. Example courtesy of \cite{guruswami2016superlinear}.}
    \label{fig:intersect}
\end{figure}

Similar results also follow for other directed graph problems, namely \ttt{ACYCLIC} and \ttt{S-ALL-CONN}. The former involves detecting whether or not there are any directed cycles in the input graph, while the latter involves determining whether or not there exists a path from a fixed input vertex $s$ to every single other vertex in the graph.

\begin{theorem}
\label{theorem:acyclic}
Given a directed graph with $n$ vertices and $m$ edges, any 1-pass streaming algorithm that solves \ttt{ACYCLIC} needs at least $\Omega(m \log \frac{n^2}{m})$ space.
\end{theorem}
\begin{proof}
The reduction from \ttt{INDEX} to \ttt{ACYCLIC}, which formally is the decision problem of determining whether or not a directed graph contains a cycle, follows a similar construction to our previous result for the one pass lower bound of \ttt{SCC}. Again, Alice has a $m \log \frac{n^2}{m}$ length binary vector and constructs the same bipartite graph from before. 

The augmentation to Alice's edge input stream that Bob provides however is simpler. Recall that $B_j'$ is the subset of the $j$th block that Bob needs to determine whether or not the $i$th vertex on the left side of the graph has an edge into or not. Bob now just needs to append edges going right to left, from every vertex in $B_j'$ to $v_i$, and test the whether or not the graph is acyclic. Since these are the only edges in the stream from right to left, if there is no cycle, then $v_i$ has an edge into $B_j \setminus B_j'$, else $v_i$ has an edge into $B_j'$.
\end{proof}

\begin{theorem}
\label{theorem:sall}
Given a directed graph with $n$ vertices and $m$ edges, any 1-pass streaming algorithm that solves \ttt{S-ALL-CONN} needs at least $\Omega(m \log \frac{n^2}{m})$ space.
\end{theorem}
\begin{proof}
The construction is similar to the $m$-parameterized lower bound for the \ttt{SCC} problem: Alice gets a binary input vector of length $m\log \frac{n^2}{m}$, and using the indices of the stream where the input is 1 we construct a bipartite graph where there are $n$ vertices on either side ($O(n)$ vertices in total). The right side is partitioned into $\frac{m}{n}$ blocks of size $\frac{n^2}{m}$, so it takes $O(\log \frac{n^2}{m})$ bits to encode the position of the vertex that an edge is going to in a particular block. Note that the source vertex $s$ is different from all these vertices and in Alice's construction will remain an isolated vertex.

Each index in Alice's stream can be considered as a 3-tuple $(i, j, k)$ representing an edge, where $i$ represents the node on the left of the graph, $j$ is the block number on the right side of the graph, and $k$ is the node number within that block. That edge is added if and only if the corresponding index in the stream is has the value 1. Alice also adds a dummy vertex $t$ which is also an isolated vertex.

Now Bob has an index which itself can be represented as a 3-tuple $(u, v, k)$. Bob adds the following edges to the graph:
\begin{enumerate}
    \item An edge from $s$ to $u$.
    \item Edges from the $k$-th vertex in the $v$-th block on the right side to every single other vertex in the graph (including $t$).
\end{enumerate}

Note that if edge $(u, v, k)$ was added by Alice, then $s$ can reach $u$ and the $k$-th vertex in the $v$-th block, and therefore every single vertex. However, if the edge was not added by Alice, then there is $s$ which will be able to reach $u$, and all vertices on the right side reachable from $u$, but none of them will be able to reach $t$. This completes the lower bound proof that we need at least $\Omega(m\log \frac{n^2}{m})$ space to solve this problem as well.

Note that the construction also works for the $s$-$t$-connectivity problem where we need to determine if there exists a path from $s$ to $t$ and the lower bound therefore applies to that problem as well.
\end{proof}

\subsection{Multiple pass lower bounds}

We first need to define the set chasing problem \ttt{SC}$_{n,p}$ as described by Guruswami and Onak \cite{guruswami2016superlinear} to obtain multi-pass lower bounds for \texttt{SCC}: 

\begin{definition}
\label{definition:sc}
The \ttt{SC}$_{n,p}$ problem consists of $p$ players, each with a function $f_i : [n] \to \mathcal{P}([n])$ where $\mathcal{P}(X)$ is the power set of $X$. Additionally, define $\overline{f_i} : \mathcal{P}([n]) \to \mathcal{P}([n])$ via $\overline{f_i} = \bigcup_{j \in [n]} f_i(j)$. The goal is to compute $\overline{f_1}(\overline{f_2}(\ldots \overline{f_n}(\{1\}) \ldots))$, which is output by the $p$-th player at the end of the $(p-1)$-th round, the last round.
\end{definition}

The actual problem considered for the multiple pass reduction is \ttt{INTERSECT}(\ttt{SC}$_{n,p}$), which is conveniently a decision problem.

\begin{definition}
\label{definition:intersect}
Given two instances of the \ttt{SC}$_{n,p}$, with functions $\overline{f_i}$ and $\overline{g_i}$, \ttt{INTERSECT}(\ttt{SC}$_{n,p}$) is the decision problem of checking if $\overline{f_1}(\overline{f_2}(...\overline{f_n}(\{1\})...)) \cap \overline{g_1}(\overline{g_2}(...\overline{g_n}(\{1\})...)) = \phi$.
\end{definition}

A diagrammatic representation of this problem is shown in Figure \ref{fig:intersect}. \cite{guruswami2016superlinear} uses this problem to show a multiple pass lower bound for the directed $s-t$ connectivity problem, and a similar reduction is shown here. 

\begin{theorem}
\label{theorem:sccmultiplepass}
Given a directed graph with $n$ vertices, reducing to \ttt{INTERSECT}(\ttt{SC}$_{n,p}$) demonstrates that $n^{1 + \Omega(1/p)}) / p^{O(1)}$ is a space lower bound for any $p-1$ pass algorithm that solves the \ttt{SCC} problem (and therefore the \ttt{SCC-FIND} problem).
\end{theorem}
\begin{proof}
The reduction from \ttt{SCC} to \ttt{INTERSECT}(\ttt{SC}$_{n,p}$) is constructed through the following:

\begin{enumerate}
    \item Consider the graph similar to Figure \ref{fig:intersect}, where the edges in the second half are flipped such that everything is going from left to right. This gives us a directed acyclic graph.
    \item Now we add an edge from every vertex in the last layer to $s$, and we also add an edge from every vertex in the middle red layer to $s$.
    \item We add an edge from $t$ to every vertex in the graph. Also for any vertex that does not have any outgoing neighbors we add an edge from it to $s$.
\end{enumerate}

\begin{figure}
    \centering
    \includegraphics[width=8cm,height=5cm]{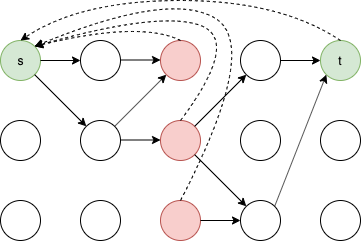}
    \caption{Reduction from \ttt{SCC} to \ttt{SC}$_{n,p}$ where $n = 3$ and $p = 2$. Quite a few edges omitted from actual construction. The middle red layer represents the output of both the \ttt{SC} problems and therefore the \ttt{INTERSECT} problem is true since there exists a vertex with edges to both sides of the field.}
    \label{fig:scctosc}
\end{figure}

We now consider the implications of \ttt{INTERSECT}(\ttt{SC}$_{n,p}$) on the original \text{SCC} instance.

\begin{itemize}
    \item If \ttt{INTERSECT}(\ttt{SC}$_{n,p}$) is false then there is no path from $s$ to $t$ since there is no vertex in the red layer that is connected to both the left and right side that is reachable from $s$. Therefore the graph cannot be strongly connected.
    \item If \ttt{INTERSECT}(\ttt{SC}$_{n,p}$) is true then we want to show that the graph is strongly connected. First note that every node is reachable from $t$ since we added an edge from $t$ to every other vertex. Now if the \ttt{INTERSECT}(\ttt{SC}$_{n,p}$) is true then $t$ is reachable from $s$ which means that every vertex is reachable from $s$. Now there are two kinds of vertices:
    \begin{itemize}
        \item Vertices that have a path to the last layer. In this case every vertex in the last layer is connected to $s$ so $s$ is reachable from the vertex.
        \item Vertices that do not have a path to the last layer. In this case they eventually hit a vertex with no out neighbors to the last layer. But in this case we added an edge from this vertex to $s$ so $s$ is reachable from the vertex.
    \end{itemize}
\end{itemize}

\end{proof}

\bibliography{ms}

\appendix

\section{Description of {\sf SampleAndRank} \cite{ailon2012active}} \label{sec:sar}

As mentioned in the main text, {\sf SampleAndRank} is similar to {\sf GetNearOptimalPermutation}, except the local optimization procedures differ:

\begin{algorithm}[H]
\caption{{\sf SampleAndRank}$(V, W, \ve)$: Top level function for computing a $(1+\ve)$-approximation to the minimum feedback arc set using only $O \lp n \tx{ poly} (\log n / \ve) \rp$ queries to $W$ in expectation.}
\begin{algorithmic}[1]
\State $n \leftarrow |V|$
\State $\pi \leftarrow $ Expected $O(1)$-approximation with expected $O(n \log n)$ queries to $W$ using Quicksort.
\State \Return {\sf RecurseSAR}$(V, W, \ve, n, \pi)$
\end{algorithmic}
\label{algorithm:snr1}
\end{algorithm}

\begin{algorithm}[H]
\caption{{\sf RecurseSAR}$(V, W, \ve, n, \pi)$: Identical recursive structure to {\sf Recurse}, except the base cases return the Trivial partition. This algorithm does not optimize those cases since its goal is not to output a near-optimal permutation, but to rather output an $\ve$-good partition. }
\begin{algorithmic}[1]
\State $N \leftarrow |V|$
\If {$N \leq \log n / \log \log n$}
\State \Return {Trivial Partition $\{ V \}$}
\EndIf
\State $E \leftarrow$ random subset of $O(\ve^{-4} \log n)$ elements from $\binom{V}{2}$ (with repetition)
\State $C \leftarrow C_E(\pi, V, W)$ \quad  ($C$ is an additive $O(\ve^2N^2)$ approximation of $C(\pi, V, W)$)
\If {$C = \Omega(\ve^2N^2)$}
\State \Return {Trivial Partition $\{ V \}$}
\EndIf
\State $\pi_1 \leftarrow$ {\sf ApproxLocalImproveSAR}$(V, W, \ve, n, \pi)$ 
\State $k \leftarrow$ uniformly random integer in the range $[N/3, 2N/3]$
\State $V_L \leftarrow \{ v \in V : \rho_{\pi}(v) \leq k \}, \pi_L \leftarrow $ restrict $\pi_1$ to $V_L$
\State $V_R \leftarrow \{ v \in V : \rho_{\pi}(v) > k \}, \pi_R \leftarrow $ restrict $\pi_1$ to $V_R$
\State \Return{Concatenate {\sf RecurseSAR}$(V_L, W, \ve, n, \pi_L)$ and {\sf RecurseSAR}$(V_R, W, \ve, n, \pi_R)$}
\end{algorithmic}
\label{algorithm:snr2}
\end{algorithm}

\begin{algorithm}[H]
\caption{{\sf ApproxLocalImproveSAR}$(V, W, \ve, n, \pi)$ \cite{ailon2012active}: The local optimization procedure for {\sf SampleAndRank}. The sampling approach differs from that of {\sf ApproxLocalImprove}. The optimization loop (line 15) has a similar terminating condition to {\sf ApproxLocalImprove}, but only one single vertex move is made at a time, which would lead to a linear number of passes in the streaming setting due to line 17, which aims to make $E_{v,i}$ independent of the resulting permutation $\pi_{u \to j}$ (we will not go into the details of the refresh step).}
\begin{algorithmic}[1]
\State $N \leftarrow |V|, B \leftarrow \lceil \log (\Theta(\ve N / \log n)) \rceil, L \leftarrow \lceil \log N \rceil$
\If {$N = O(\ve^{-3}\log^3 n)$}
\State \Return{$\pi$}
\EndIf
\For{$v \in V$}
\State $r \leftarrow \rho_{\pi}(v)$
\For {$i \in [B, L]$}
\State $E_{v,i} \leftarrow \phi$
\For {$m \in [1, \Theta(\ve^{-2} \log^2 n)]$}
\State $j \leftarrow$ integer uniformly at random chosen from $[\max\{1, r-2^i\}, \min\{n, r+2^i\}]$
\State $E_{v,i} \leftarrow E_{v,i} \cup \{(v, \pi(j)\}$
\EndFor
\EndFor
\EndFor
\While {$\exists u \in V$ and $j \in [n]$ s.t. (where $l = \lceil \log |j - \rho_{\pi}(u)| \rceil$) : $l \in [B,L]$ and $\tx{TestMove}_{E_{u,l}}(\pi, V, W, u, j) > \ve |j - \rho_{\pi}(u)| / \log n $}
\For {$v \in V$ and $i \in [B,L]$}
\State refresh sample $E_{v,i}$ with respect to the move $(u \to j)$
\EndFor
\State $\pi \leftarrow \pi_{u \to j}$
\EndWhile
\State \Return{$\pi$}
\end{algorithmic}
\label{algorithm:snr3}
\end{algorithm}

\section{{\sf AddApproxMAS} in the Semi-streaming Setting}\label{sec:addapprox-semi}

\begin{proof}[Proof of Theorem \ref{thm:addapprox}]

Frieze et al. \cite{frieze1999quick} introduce an efficient algorithm to obtain cut decompositions of matrices that are additive approximations with respect to their Frobenius norms, and demonstrate that this can be used to find a permutation that yields an additive approximation of the feedback arc set. Their algorithm can be adapted to the streaming setting, and the cut decomposition representation is highly useful since the number of cuts is only poly($1/\ve$) while each cut can be stored in $O(n)$ space. Shown below is the cut decomposition algorithm that we adapt, hereon denoted {\sf AddApproxMAS}$(V, E, \ve, \delta)$ where $V$ is the vertex set, $E$ is the edge set, $\ve$ is the accuracy parameter, and $\delta$ is the probability of correctness. The following are constants used in the pseudocode:
\begin{enumerate}
    \item $t_0 = O \lp {\ve^{-4}}\rp$
    \item $r_0 = O \lp {\ve^{-4}}\rp$
    \item $s_0 = O \lp \log \frac{t_0}{\delta} \rp$
    \item $p = O \lp \ve^{-4}\log \frac{t_0r_0s_0}{\delta} \rp$
    \item $q = O(p r_0)$
    \item $q' = O(ps_0t_0\delta^{-1} + \ve^{-8} \log \frac{s_0t_0}{\delta})$
\end{enumerate}

Frieze et al. use notation that is clarified below for a given matrix $W$, where $R, C$ represent the row and column set of the matrix $W$:

\begin{enumerate}
    \item $\mathbf{W}(S,T) = \sum_{(i,j)\in S\times T} W_{i,j}$
    \item $P_\mathbf{W}(R) = \{x\in C \mid \mathbf{W}(R, x)\geq 0 \}$
    \item $N_\mathbf{W}(R) = C\setminus P_\mathbf{W}(R)$
    \item $P_\mathbf{W}(C) = \{x\in R \mid \mathbf{W}(x, C)\geq 0 \}$
    \item $N_\mathbf{W}(C) = R\setminus P_\mathbf{W}(C)$
    \item $\text{Cut}(S,T, d)_{i,j} = \begin{cases}
    d & \text{if } (i,j)\in S\times T \\
    0 & \text{otherwise}
    \end{cases}$
    \item $G(\nu) = \begin{cases}
    \{ u \in R : \bw(u, v) \geq \nu \} & \text{if } \nu \geq 0 \\
    \{ u \in R : \bw(u, v) \leq \nu \} & \text{if } \nu < 0
    \end{cases}$
\end{enumerate}

Note that the elements of the cut decomposition are defined by $\text{Cut}(S,T, d)$ and can be efficiently stored. This is because we only need to know the row and column indices encompassed by $S$ and $T$, in addition to the value of $d$. The following is the pseudocode for obtaining the cut decomposition of an input matrix $A$, which is used as a sparse approximation of the matrix used to compute the maximum acyclic subgraph. \\

\begin{algorithm}[H]
\caption{GetCutDecomposition$(A)$ \cite{frieze1999quick}: }
\begin{algorithmic}[1]
\For{$t = 0, 1, ..., t_0 - 1$}
    \State Set $W = A - (D^{(1)} + ... + D^{(t)})$.
    \For{$s = 1, 2, ..., s_0$}
        \For{$r = 1, 2, ..., r_0$}
            \State Pick $v$ from $C$ uniformly at random.
            \State Pick $\nu$ uniformly at random from $[-1,1]$.
            \State Pick random subsets $U, U_1 \subseteq R$ independently such that $|U| = p$ and $|U_1| = q$.
            \State Pick random subset $V_1$ from $C$ independently such that $|V_1| = q$.
            \State Let $\tilde{R} \leftarrow G(\nu)$ and
            \State $\tilde{C} \leftarrow \begin{cases} 
                P_{\bw}(\tilde{R} \cap U) & \text{if } \nu > 0 \\
                N_{\bw}(\tilde{R} \cap U) & \text{if } \nu < 0
                \end{cases}$ 
            \State Compute an estimate of $\bw(\tilde{R}, \tilde{C})$ as $\tilde{W} = \frac{mn}{q^2} \bw(\tilde{R} \cap U_1, \tilde{C} \cap V_1)$.
        \EndFor
        \State Let $\tilde{R}, \tilde{C}$ be the sets which obtain the largest value of $|\tilde{W}|$ in lines 4-12.
        \State Choose new random subsets $U_1 \subseteq R, V_1 \subseteq C$ such that $|U_1| = |V_1| = q'$.
        \State Recompute $\tilde{W}$ as $\frac{mn}{q'^2} \bw(\tilde{R} \cap U_1, \tilde{C} \cap V_1)$. 
        \If{$|\tilde{W}| < \ve^2mn/9$} 
            \State Goto the next $s$ (if $s = s_0$, goto the final line 46)
        \EndIf
        \State Compute the estimate $\rho$ for $|\tilde{R}|$ as follows: $\rho = \frac{m}{q'}|\tilde{R} \cap U_1|$.
        \If{$\rho \geq 2m/5$}
            \State Goto line 30
        \Else
            \State Estimate $\bw(R, \tilde{C})$ by $W_1 = \frac{mn}{q'^2}\bw(U_1, \tilde{C} \cap V_1)$.
            \If{$W_1 \geq \ve^2mn/19$}
                \State $\tilde{R} \leftarrow R, \tilde{W} \leftarrow W_1, \rho \leftarrow m$
            \Else
                \State $\tilde{R} \leftarrow R \setminus \tilde{R}, \tilde{W} \leftarrow \frac{mn}{q'^2}\bw(U_1, V_1) - W_1, \rho \leftarrow m - \rho$
            \EndIf
        \EndIf
        \State Compute the estimate $\kappa$ for $|\tilde{C}|$ as follows: $\kappa = \frac{n}{q'}|\tilde{C} \cap V_1|$.
        \If{$\kappa \geq 2n/5$}
            \State Goto line 41
        \Else
            \State Estimate $\bw(\tilde{R}, C)$ by $W_2 = \frac{mn}{q'^2}\bw(\tilde{R} \cap U_1, V_1)$.
            \If{$W_2 \geq \ve^2mn/39$}
                \State $\tilde{C} \leftarrow C, \tilde{W} \leftarrow W_2, \kappa \leftarrow n$
            \Else
                \State $\tilde{C} \leftarrow C \setminus \tilde{C}, \tilde{W} \leftarrow \frac{mn}{q'^2}\bw(U_1, V_1) - W_2, \kappa \leftarrow n - \kappa$
            \EndIf
        \EndIf
        \State $R_{t+1} \leftarrow \tilde{R}, C_{t+1} \leftarrow \tilde{C}, d_{t+1} \leftarrow \frac{\tilde{W}}{\rho\kappa}$
        \State $D^{(t+1)} \leftarrow Cut(R_{t+1}, C_{t+1}, d_{t+1})$
        \State Goto the next $t$ (unless $t = t_0$ in which case the algorithm fails)
    \EndFor
\EndFor
\State \Return{$D^{(1)} + ... + D^{(t)}$ as the approximation to $A$}.
\end{algorithmic}
\label{algorithm:cutdecomp}
\end{algorithm}


As $p$ and $q$ are poly$(1/\ve)$, then one can store each submatrix defined by the row subsets $U, U_1$ and $V, V_1$, and since $\mathbf{W}(S,T)$ is only queried for $S$ and $T$ that are respectively always subsets of $U,U_1$ or $V,V_1$, then the computations in the above algorithm can be done. As defined earlier, $t_0, s_0, r_0$ are all poly$(1/\ve)$ meaning it is feasible to keep track of the relevant information simultaneously. Once the approximation $\mathbf{D} = D^{(1)} + \dots + D^{(t)}$ for the adjacency matrix is computed, a linear program can be devised to obtain a near optimal permutation. The cost of a permutation has been shown to be related to the cost of a placement\cite{frieze1999quick, arora2002new}, where the vertices are partitioned into $X_1,\dots, X_\ell$.

$$c(\pi) = \sum_{i=1}^n\sum_{j=1}^n  W(i,j) \mathbf{1}\left[\pi(i) \in X_{i'}, \pi(j) \in X_{j'}, i' < j' \right] + \Delta $$

where $|\Delta| \leq\ve n^2/ 2$. Frieze et al. discuss solving for the best permutation under this quadratic assignment in their Section 3.3 by rewriting this cost objective in a different representation, and maximizing the objective with respect to feasibility of a linear system to ensure the existence of a permutation that satisfies the partition requirements. Thus the base case for the main algorithm is sufficient as {\sf AddApproxMAS}$(V, E, \ve, \delta)$ can be adapted to the streaming setting with suitable offline processing.

\end{proof}

\end{document}